\title{Deep Learning-Based Solvability of Underdetermined Inverse Problems in Medical Imaging \thanks{This work was supported by Samsung Science $\&$ Technology Foundation (No. SRFC-IT1902-09).}}
\author{Chang Min Hyun\thanks{Department of Computational Science and Engineering, Yonsei University, Seoul, Korea} \and
	Seong Hyeon Baek\footnotemark[2] \and Mingyu Lee\footnotemark[2] \and
	Sung Min Lee\footnotemark[2] \and
	Jin Keun Seo\footnotemark[2] \thanks{To whom the corresponding author  (\email{seoj@yonsei.ac.kr})}}
\def\R{{\Bbb R}}
\def\Om{\Omega}
\def\f{\frac}
\def\p{\partial}
\def\q{\quad}
\def\b{\mathbf{b}}
\def\c{\mathbf{c}}
\def\d{\mathbf{d}}
\def\h{{\mathbf h}}
\def\u{\mathbf{u}}
\def\x{\mathbf{x}}
\def\y{\mathbf{y}}
\def\z{\mathbf{z}}
\def\A{{\mathbf A}}
\def\B{\mathbf{B}}
\def\D{{\mathbf D}}
\def\mR{{\mathcal R}}
\def\bt{\boldsymbol{t}}
\def\roi{{\Omega_{\mbox{\tiny ROI}}}}
\def\fb{{f_{\flat}}}
\def\bth{\boldsymbol{\theta}}
\begin{document}
\maketitle

\begin{abstract}
	Recently, with the significant developments in deep learning techniques, solving underdetermined inverse problems has become one of the major concerns in the medical imaging domain, where underdetermined problems are motivated by the willingness to provide high resolution medical images with as little data as possible, by optimizing data collection in terms of minimal acquisition time, cost-effectiveness, and low invasiveness.  Typical examples include undersampled magnetic resonance imaging(MRI), interior tomography, and sparse-view computed tomography(CT), where deep learning techniques have achieved excellent performances. However, there is a lack of mathematical analysis of why the deep learning method is performing well. This study aims to explain about learning the causal relationship regarding the structure of the training data suitable for deep learning, to solve highly underdetermined problems. We present a particular low-dimensional solution model to highlight the advantage of deep learning methods over conventional methods, where two approaches use the prior information of the solution in a completely different way. We also analyze whether deep learning methods can learn the desired reconstruction map from training data in the three models (undersampled MRI, sparse-view CT, interior tomography). This paper also discusses the nonlinearity structure of underdetermined linear systems and conditions of learning (called $\mathcal{M}$-RIP condition).
\end{abstract}

\begin{keywords}
	underdetermined linear inverse problem, deep learning, medical imaging, magnetic resonance imaging, computed tomography
\end{keywords}

\begin{AMS}
	15A29, 65F22, 68T05, 68Q32
\end{AMS}

\section{Introduction}
In medical imaging, we want to improve our visual ability to provide meaningful expression and useful description of diagnosis and treatment, while optimizing data collection in terms of minimal acquisition time, cost-effectiveness, and low invasiveness. The goal is to find a function $f$ that maps from inputs (what we measure) to outputs (reconstructing useful medical images):
\begin{equation}\label{MI-f}
f(\mbox{input data}) = \mbox{useful output}.
\end{equation}
The output could be the two- or three-dimensional visual representation of the interior of a body, such as computerized tomography (CT)  \cite{Hounsfield1973,Natterer1986} and magnetic resonance imaging (MRI)  \cite{Lauterbur1973,Haacke1999}. To achieve the output of reasonable resolution and accuracy, we have used a suitable means of measurement (input), which allows to reconstruct the output.

Conventional CT and MRI data collections are designed to obtain a well-posed reconstruction method in the sense that the corresponding forward models are well-posed. The forward model can be expressed as the well-posed linear system as follows:
\begin{equation} \label{fullAyb}
\A_{\mbox{\tiny{full}}}\y =\b_{\mbox{\tiny{full}}}
\end{equation}
where $\b_{\mbox{\tiny{full}}}$ denotes the ``fully sampled" data (e.g, sinogram in CT and k-space data in MRI), $\y$ denotes the CT or MRI image, and $\textbf{A}_{\mbox{\tiny{full}}}$ denotes the invertible matrix representing discrete Radon transform for CT and discrete Fourier transform for MRI.
More precisely, the forward model of CT is based on the assumption that the measured X-ray projection data is the Radon transform of the output image. To achieve the given resolution of the output and invert the corresponding discrete Radon transform, a sufficient number of projection angles are required so that the number of equations (measured data) becomes greater than the number of unknowns (the number of pixels in the image). Hence, given the resolution of the output, the traditional approaches require a certain amount of data acquisition in order to make the reconstruction problem  well-posed in the sense of Hadamard\cite{Hadamard1902}.

Because of the great needs to reduce the radiation dose in CT and data acquisition time in MRI, considerable attention has been given to solve underdetermined problems (or ill-posed inverse problems) that violate the Nyquist criteria \cite{Nyquist1928} in the sense that the number of equations is much smaller than the number of unknowns. The demand for undersampled MRI is because of the long scan time with the human body trapped in an inconvenient narrow bore; shortening the MRI scan time can increase the satisfaction of patient, reduce the artifacts caused by patient movement, and reduce the medical costs \cite{Koff2006,Lustig2007,Candes2006,Donoho2006,Sodickson1997}. The need for low-dose CT arises from the cancer risks associated with the exposure of patients to ionizing radiation \cite{Islam2006,Kudo2013,Brenner2017}. A highly underdetermined problem (far less equations than unknowns) corresponding to \eqref{fullAyb} can be expressed as follows:
\begin{equation} \label{undersampled}
\underbrace{\A}_{\mathcal S_{\mbox{\tiny{sub}}}(\A_{\mbox{\tiny{full}}})}\y =\underbrace{\b}_{ \mathcal S_{\mbox{\tiny{sub}}}(\b_{\mbox{\tiny{full}}})}
\end{equation}
where  $\mathcal S_{\mbox{\tiny{sub}}}$ denotes a subsampling operator. For example, in undersampled MRI, $\b=\mathcal S_{\mbox{\tiny{sub}}}(\b_{\mbox{\tiny{full}}})$ denotes an undersampled k-space data violating the Nyquist sampling criterion, and $\y$ denotes MRI image reconstructed using a fully sampled k-space data. Because $\A$ is not an invertible matrix, there exist infinitely many solutions.

Solving the underdetermined problem \eqref{undersampled} depends on the appropriate use of {\it a priori} information about medical CT or MRI images as solutions. However, the conventional approaches using prior knowledge, such as regularization and compressed sensing(CS) approaches, may not be appropriate for medical images in which small anomalous details are more important than the overall feature \cite{Jaspan2015}. Currently, deep learning techniques have exhibited excellent achievement in various underdetermined problems such as undersampled MRI, interior tomography, and sparse view CT. They seem to overcome the limitations of the existing mathematical methods in handling various ill-posed problems \cite{Jin2017,Han2017,Hyun2018}.
It is highly expected that deep learning methodologies will improve their performance, as training data and experience are accumulated over time. However, there is a tremendous lack of a rigorous mathematical foundation that would allow us to understand the reasons for the remarkable performance of deep learning methods \cite{Kawaguchi2017}.

This study aims to provide a systematic basis for learning the causal relationship regarding the structure of the training data suitable for deep learning to solve highly underdetermined problems. The goal of the undersampled problem \eqref{undersampled} is to find a reconstruction map $\fb:\b \to \y$ that maps from the highly undersampled data $\b$ to the image $\y=\A_{\mbox{\tiny{full}}}^{-1}\b_{\mbox{\tiny{full}}}$ in \eqref{fullAyb}. Here, for ease of explanation, we ignore the noise in $\b$ and abuse the notation of $\A_{\mbox{\tiny{full}}}^{-1}$, which should be understood as representing the filtered backprojection (FBP) in CT and the inverse Fourier transform in MRI \cite{Seo2013}.  Without using a constraint on $\y$, one cannot find the reconstruction map $\fb$.  Hence, we must use the prior knowledge on the data distribution of all possible images to be reconstructed.
To extract prior knowledge on solutions, deep learning-based techniques use training data $\{(\b^{(k)},\y^{(k)})\}_{k=1}^{\mathfrak{n}_{\mbox{\tiny data}}}$, where $\y^{(k)}= \A_{\mbox{\tiny{full}}}^{-1} \b^{(k)}_{\mbox{\tiny{full}}}$ and $\b^{(k)}=\mathcal{S}_{\mbox{\tiny sub}}\b^{(k)}_{\mbox{\tiny{full}}}$.

Learning $\fb:\b\to \y$ can be achieved by learning the following map:
\begin{equation} \label{f-DL1}
f: \x= \A^{\sharp}\b \mapsto \y=\A_{\mbox{\tiny full}}^{-1}\b_{\mbox{\tiny full}}.
\end{equation}
where $\A^{\sharp} = \A_{\mbox{\tiny full}}^{-1} \mathcal S_{\mbox{\tiny sub}}^*$ and $\mathcal S_{\mbox{\tiny sub}}^*$ is the dual of $\mathcal S_{\mbox{\tiny sub}}$, which can be understood as the zero padding operator corresponding to the subsampling $\mathcal S_{\mbox{\tiny sub}}$. Using the transformed training data $\{(\x^{(k)},\y^{(k)})\}_{k=1}^{\mathfrak{n}_{\mbox{\tiny data}}}$, where $\x^{(k)}=\A^\sharp\b^{(k)}$, we consider the learning objective as follows:
\begin{equation}\label{MDL}
f= \underset{f \in \mathbb{NN}}{\mbox{argmin}} ~ \sum_{k=1}^{\mathfrak{n}_{\mbox{\tiny data}}} \| f(\x^{(k)}) - \y^{(k)} \|_{\ell_2}
\end{equation}
where $\Bbb{NN}$ denotes a set of functions described in a special form of neural network and $\|\cdot\|_{\ell_p}$ is the $\ell_p$ norm of the vector. Notably, this $f$ in \eqref{MDL} is designed to work well only on a low-dimensional solution manifold obtained by regressing the training data $\{\y^{(k)}\}_{k=1}^{\mathfrak{n}_{\mbox{\tiny data}}}$, not on the entire image domain.

This paper aims to provide some mathematical grounds for the learnability of $f$ by using various performance experiments. In Section \ref{M-example},  we present a particular low-dimensional solution model to highlight the advantage of a deep learning method over conventional methods using PCA, wavelets, and total variation regularization. In Section \ref{sec-Solvability}, we discusses the nonlinearity structure of underdetermined linear systems and conditions of learning (called $\mathcal{M}$-RIP condition). We observe that highly underdetermined linear systems in medical imaging are highly non-linear. We also examine whether a desired reconstruction map $f:\x \to \y$ can be learnable from the training data. The learning ability depends on the subsampling strategy $\mathcal{S}_{\mbox{\tiny sub}}$, and quality and quantity of training data. Section \ref{SolveUMRI} investigates the learnability of undersampled MRI. It depends on the sampling pattern as follows: (i) If $\mathcal{S}_{\mbox{\tiny sub}}$ is a uniform subsampling, $f$ is not learnable because there exist two different realistic images, namely, $\y$ and $\y'$, such that $\A^\sharp \A(\y-\y')=0$. (ii) If $\mathcal{S}_{\mbox{\tiny sub}}$ denotes a uniform sampling with one additional phase encoding line, then $f$ is learnable. We also deal with the learnability of $f$ in interior tomography (see Section \ref{sec-localCT}) and  sparse-view CT (see Section \ref{sec-sparseCT}). In interior tomography, $f$ is learnable because $f(\x)-\x$ is directionally analytic; therefore, it is determined by the very local information of it. In sparse-view CT, $f$ is somehow learnable because $f(\x)-\x$ has common repetitive local patterns that are very different from realistic images. Finally, in Section \ref{sec-discussion}, we discuss some issues related to deep learning-based solvability for underdetermined problems.

\section{Analysis on Underdetermined Inverse Problem}
\begin{figure*}[t!]
\begin{center}
	\includegraphics[width=1\textwidth]{./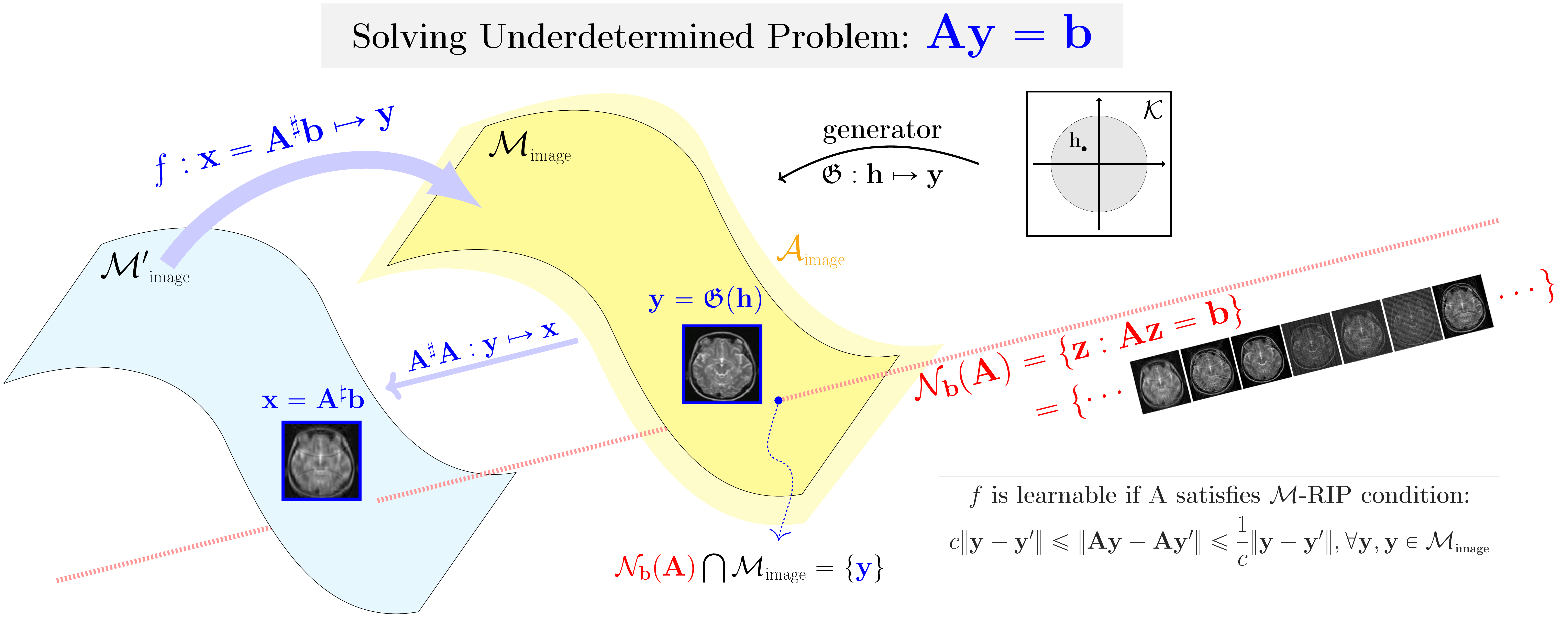}
	\caption{Description of solvability of underdetermined inverse problem $\A \y = \b$. Solving $\A \y = \b$ can be achieved by learning $f : \x = \A^\sharp \b \mapsto \y = \A_{\mbox{\tiny full}}^{-1}\b_{\mbox{\tiny full}}$ in \eqref{f-DL1} with probing the solution manifold $\mathcal M_{\mbox{\tiny image}}$. If $\A$ satisfies the $\mathcal M$-RIP condition, then $\A^\sharp \A : \mathcal M_{\mbox{\tiny image}} \mapsto \mathcal M_{\mbox{\tiny image}}^\prime$ is one-to-one, i.e., $\mathcal{N}_\b(\A) \cap \mathcal M_{\mbox{\tiny image}}=\{ \y \}$ is unique. In general, $f$ is nonlinear and the degree of non-linearity depends on the sampling strategy of $\b$ and the degree of bending the solution manifold.}
	\label{Fig-main diagram}
\end{center}
\end{figure*}
This section considers the underdetermined problem \eqref{undersampled}, where $\b$ represents the undersampled data (e.g., $k$-space data in undersampled MRI and sinogram data in sparse-view CT and interior tomography).  We denote the dimensions of row and column vectors by $n$ and $m$, respectively. Note that $n$ is the same as the dimension of image. The relation between the undersampled data $\b$ and the corresponding fully sampled data $\b_{\mbox{\tiny{full}}}$ can be expressed by
\begin{equation}\label{subsample}
\b=\mathcal S_{\mbox{\tiny{sub}}}(\b_{\mbox{\tiny{full}}})
\end{equation}
where $\mathcal S_{\mbox{\tiny{sub}}}$ denotes the subsampling operator.
Since $\A$ is $m \times n$ matrix with $m \ll n$, the underdetermined problem \eqref{undersampled} has infinitely many solutions, which constitute the $n-m$ dimensional subplane given by the followings:
\begin{equation}\label{NbA}
\mathcal N_b(\A):=\{\y\in \mathbb{R}^{n}(\mbox{or } \mathbb{C}^n):  \A\y=\b \}
\end{equation}
To find $f$ in \eqref{f-DL1}, it is necessary to find a way to convert the distorted image $\x=\A^\sharp\b$ to the desired image $\y$, which is selected from the set $\mathcal N_b(\A)$.
In order to determine the unique solution among $\mathcal N_b(\A)$, we have to restrict the solution by invoking the prior knowledge of expected solutions.

\subsection{Constrained reconstruction problem}
Assume that $\mathcal{A}_{\mbox{\tiny image}}$ is a set of all realistic images that include the set of all $\y=\A_{\mbox{\tiny full}}^{-1}\b_{\mbox{\tiny full}}\in \mathbb{R}^{n}$(or $\mathbb{C}^{n}$), where $\b_{\mbox{\tiny full}}$ denotes the fully sampled medical data. We consider the following constraint problem:
\begin{equation} \label{ConMin}
\left\{ \begin{array}{l}
\mbox{Solve}  ~~  \A \y = \b \\
\mbox{subject to the constrait } \y\in \mathcal{A}_{\mbox{\tiny image}}.
\end{array}
\right.
\end{equation}
Ideally, we hope that $\mathcal N_b(\A) \cap \mathcal{A}_{\mbox{\tiny image}}
=\{ \y\in \mathcal{A}_{\mbox{\tiny image}}: \A\y=\b\}\neq \emptyset$, and that all the images in the set $\mathcal N_b(\A) \cap \mathcal{A}_{\mbox{\tiny image}}$ are visually same for radiologists. Hence, it seems to be necessary to describe a similarity measure between two images, $\y, \y'\in \mathcal{A}_{\mbox{\tiny image}}$, by defining the distance $\mbox{dist}_{\mbox{\tiny radiologist}}(\y, \y')$; e.g., $\mbox{dist}_{\mbox{\tiny radiologist}}(\y, \y')= 0$  means that both images are visually the same for radiologists. Currently, it seems to be considerably difficult to develop a concept of $\mbox{dist}_{\mbox{\tiny radiologist}}(\y, \y')$ that agrees with the perspective of medical radiologists. To simplify the problem \eqref{ConMin} along with avoiding complex similarity issues in terms of radiologists, let us assume the following:
\begin{itemize}
\item[\mbox{[H1]}] Any image in $ \mathcal{A}_{\mbox{\tiny image}}$ lies on or near a low-dimensional manifold, which is denoted by $\mathcal{M}_{\mbox{\tiny image}}$, whose Hausdorff dimension, which is denoted by $\mathfrak{d}_{\mbox{\tiny mfd}}$, is smaller than $m$ (i.e., the dimension of sampled vector $\b$).
\item[\mbox{[H2]}] There exists a generator $\mathfrak{G}: \h\in \mathcal K \to \y\in \R^{n}$ such that the following hold:
\begin{equation}
\mathcal{M}_{\mbox{\tiny image}} = \{ \textbf{y} \in \mathbb{R}^{n} : \textbf{y}=\mathfrak{G}(\h) ~\mbox{and}~ \h \in \mathcal K \}
\end{equation}
where $\mathcal K$ denotes a subset of $\R^{\mathfrak{d}_{\mbox{\tiny mfd}}}$. Moreover, there exists a constant $c \in (0,1]$ such that the following hold: For all $\h,\h' \in \mathcal K$,
\begin{equation}
c \| \h-\h'\| \le  \| \mathfrak{G}(\h) - \mathfrak{G}(\h')\| \le \f{1}{c} \|\h-\h'\|
\end{equation}
\item[\mbox{[H3]}] There exists a normalization map $\mathfrak N : \mathcal{A}_{\mbox{\tiny image}} \mapsto \mathcal{M}_{\mbox{\tiny image}}$ such that if two images $\y, \y'\in\mathcal{A}_{\mbox{\tiny image}}$ are visually the same for radiologists, then $\mathfrak N(\y)=\mathfrak N(\y')$.
\end{itemize}
The manifold $\mathcal{M}_{\mbox{\tiny image}}$ can be viewed as a set of all $256\times 256$ human head-MR images in undersampled MRI problems, or as a set of all $512\times 512$ CT images in underdetermined CT problems. In the normalization map $\mathfrak N$, the difference $\y-\mathfrak N(\y)$ can be a noise that does not contain any diagnostic feature.

\begin{figure*}[t!]
\centering
\includegraphics[width=0.9\textwidth]{./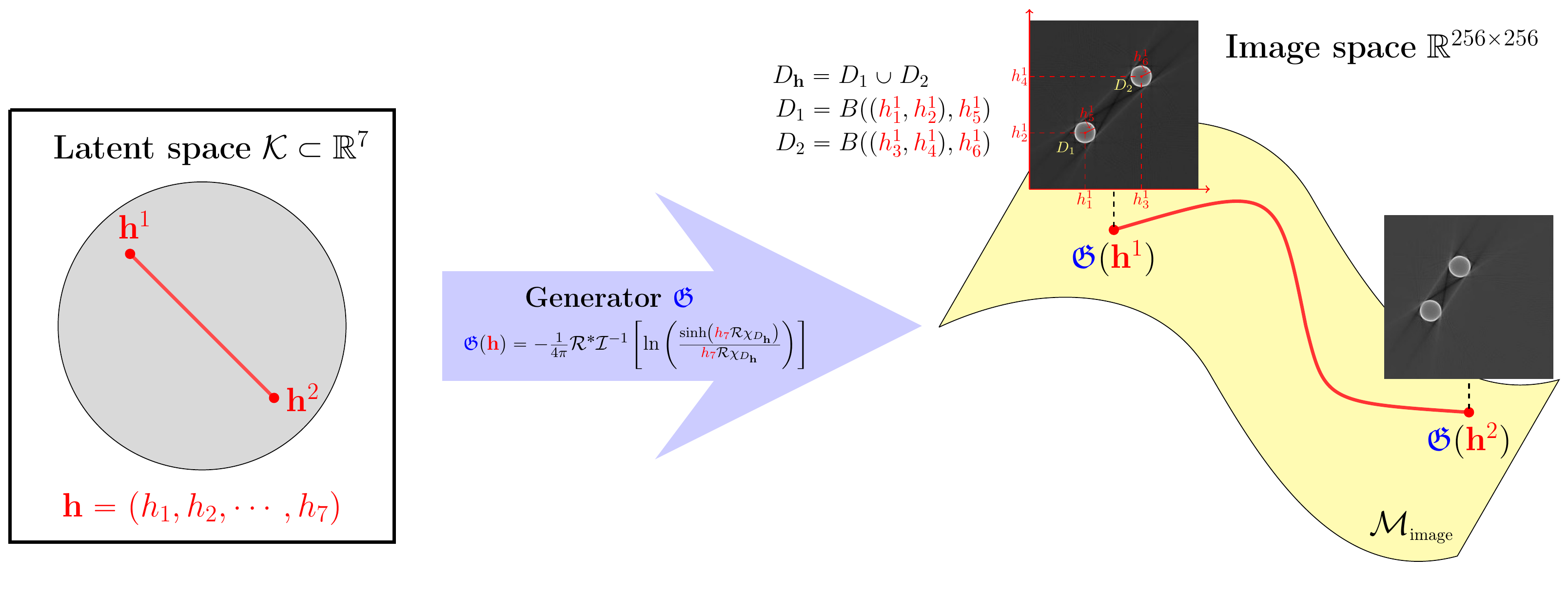}
\caption{Special solution manifold $\mathcal{M}_{\mbox{\tiny image}} \subset \mathbb{R}^{256 \times 256}$ to highlight the advantage of deep learning method over conventional methods. It is generated by the generator function $\mathfrak{G}$ in \eqref{Mani2}. The manifold $\mathcal{M}_{\mbox{\tiny image}}$ is seven dimensional. Images in $\mathcal{M}_{\mbox{\tiny image}}$ consist of two disks and a special streaking feature joining the two disks.   }
\label{FigMA}
\end{figure*}

If we have both the generator $\mathfrak{G}$ and the normalizer $\mathfrak N$ in the above assumptions [H1]--[H3], the underdetermined problem \eqref{ConMin} becomes a somewhat well-posed problem as follows:
\begin{equation}\label{ConMin2}
\mbox{Given $\x=\A^\sharp\b$, solve}  ~ \A\mathfrak G (\h)  = \A \x ~ \mbox{for}~\h
\end{equation}
where the number of unknowns are smaller than the number of equations. A necessary condition for the solvability of \eqref{ConMin2} is $\mathfrak{d}_{\mbox{\tiny mfd}}\le  n-m$.
Moreover, with the aid of the generator $\mathfrak{G}$, the very ambiguous distance $
\mbox{dist}_{\mbox{\tiny radiologist}}(\y, \y')$ from the viewpoint of radiologist can be clearly defined as $\| \h - \h^\prime \|$, where $\mathfrak G (\h) = \mathfrak{N}(\y)$ and $\mathfrak G(\h^\prime) = \mathfrak{N}(\y^\prime)$. However, finding both the generator $\mathfrak{G}$ and the normalizer $\mathfrak N$ may be very difficult task, which is expected to be achieved via deep learning techniques using a training dataset $\{\y^{(k)}\}_{k=1}^{\mathfrak{n}_{\mbox{\tiny data}}}$ in the near future.

The reconstruction map $f: \x\to \y$ in \eqref{f-DL1} can be expressed as
\begin{equation}\label{f-Mani-min}
f(\x):=\underset{\textbf{y}\in \mathcal{M}_{\mbox{\tiny image}}}{\mbox{argmin}} ~ \| \A^\sharp \A \y - \x \|^2_{\ell^2},
\end{equation}
by assuming that there exists a unique minimizer and that $\mathcal{M}_{\mbox{\tiny image}}$ is known. Since it is very difficult to know the manifold $\mathcal{M}_{\mbox{\tiny image}}$, one can achieve the reconstruction map $f$ as follows:
\begin{equation}\label{f-NN-min}
f :=\underset{f\in \Bbb{NN}}{\mbox{argmin}} ~ \sum_{k=1}^{\mathfrak{n}_{\mbox{\tiny data}}} \| f(\x^{(k)})- \y^{(k)} \|^2_{\ell^2},
\end{equation}
where $\Bbb{NN}$ denotes a set of functions described in a special form of neural network.

An important question is ``\textit{what is the minimum ratio of undersampling to provide guarantee of accurate reconstruction $f$ in \eqref{f-Mani-min}}?". It is closely related to the dimension of the manifold $\mathcal{M}_{\mbox{\tiny image}}$ and the capability of finding the generator $\mathfrak{G}$ in [H2].  Currently, our explicit knowledge on the solution prior (i.e., $\mathcal{M}_{\mbox{\tiny image}}$) is very limited and hardly built.

To clarify a concept of manifold prior, we try to solve and analyze underdetermined problems subjected to the model manifold, which is well-understood in a mathematical framework.

\subsection{A special solution manifold: Comparison of conventional methods with deep learning method}\label{M-example}
\begin{figure*}[t!]
\centering
\includegraphics[width=1\textwidth]{./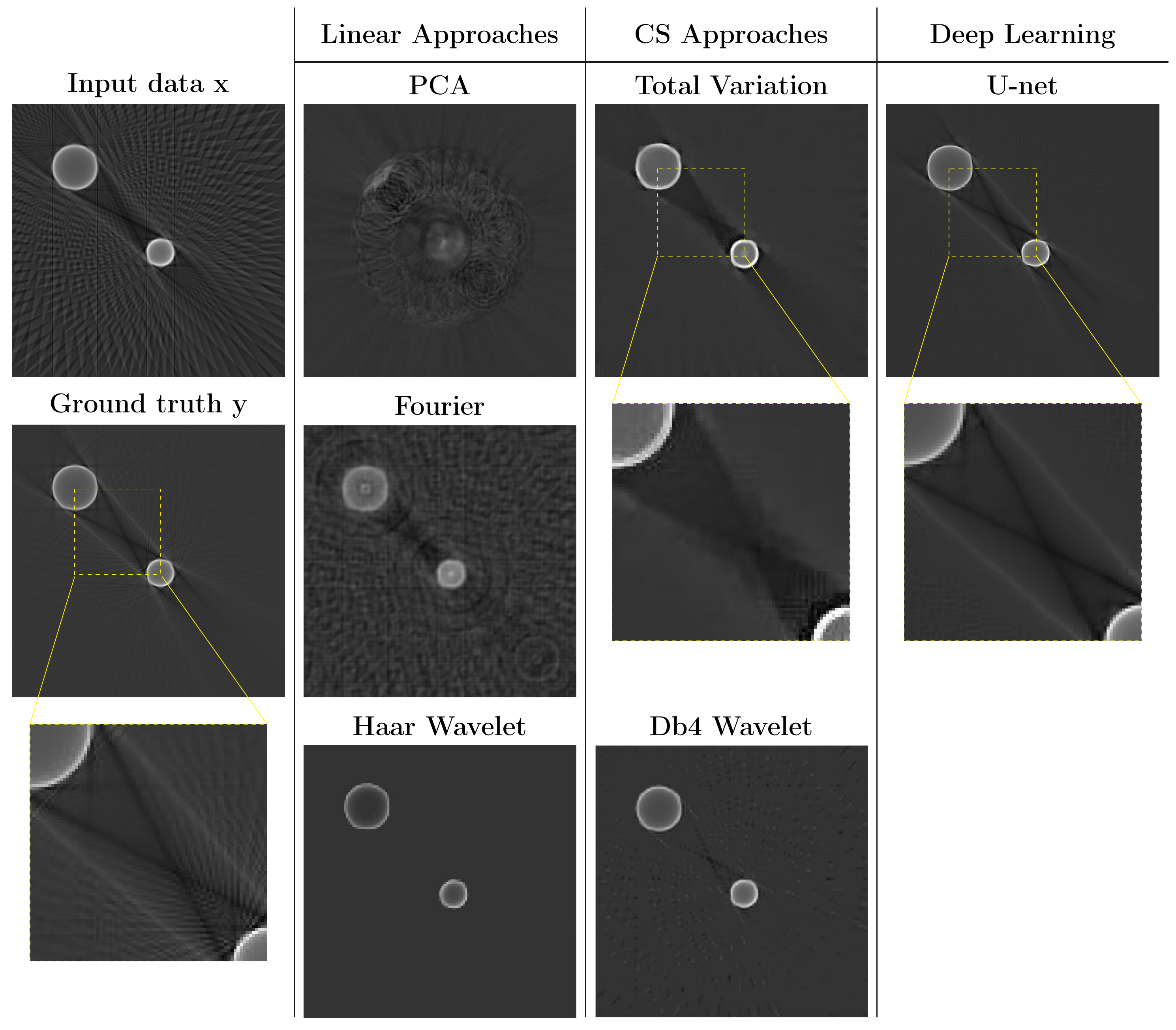}
\caption{Empirical results using various reconstruction approaches for the sparse-view CT problem with the special solution manifold in \eqref{Mani1}. For the linear projection approach, three different linear representations (PCA, Fourier, and Haar wavelet) of the input data were truncated at the 800th term after arranging the terms in the descending order according to the absolute value of their coefficients. For implementing the CS approach, we applied the $\ell_1$ convex relaxation method using two different transforms (total variation(TV) and Daubechies four tap(Db4) wavelet). An optimal regularization parameter was empirically selected between 0.01 and 1. Lastly, as the deep learning technique, U-net, trained by 800 training data pairs, is used.}
\label{FigMPCT}
\end{figure*}
This section provides a novel example of a low-dimensional manifold $\mathcal{M}_{\mbox{\tiny image}}$ to explain [H1]--[H3]. Using this manifold, we examine the capability of solving the sparse-view CT model using various exiting methods such as linear approaches (e.g. PCA, truncated Fourier and wavelet transform), sparse sensing (e.g. TV and Dictionary learning), and deep learning (e.g. U-net). This special solution manifold $\mathcal{M}_{\mbox{\tiny image}}$ highlights the advantage of deep learning method over conventional methods, where two approaches use prior information of the solution in a completely different way.

Our example of the manifold $\mathcal{M}_{\mbox{\tiny image}}$ in [H1] is seven dimensional and given by
\begin{equation} \label{Mani1}
\mathcal{M}_{\mbox{\tiny image}} := \{ ~ \mathfrak{G}(\h) \in \mathbb{R}^{n} ~ : ~ {\textbf{h}} \in \mathcal K \}
\end{equation}
where $n=256^2$, $\mathcal K$ is a compact subset of $\mathbb{R}^{7}$, and the continuous version of $\mathfrak{G} (\h)$ is given by
\begin{equation}\label{Mani2}
\mathfrak{G}(\h)=-\f{1}{4\pi}\mR^* \mathcal I^{{\tiny{-1}}}\left[\ln\left(\f{\sinh\left( h_7\mR\chi_{D_\textbf{h}}\right)}{h_7 \mR\chi_{D_\textbf{h}}}\right)\right]
\end{equation}
where the notations are the following:
\begin{itemize}
\item $\mR^* $ is the dual of the Radon transform $\mathcal{R}$. (See Section \ref{sec-localCT} for details.)
\item $\mathcal I^{{\tiny{-1}}}$ is the Riesz potential of degree -1.
\item ${\textbf{h}}=(h_1, h_2, \cdots, h_7)$.
\item $D_\textbf{h}$ is a union of two disks with centers $(h_1,h_2), (h_2,h_3)$ and radii $h_5, h_6$.
\item $\chi_{D}$ is the characteristic function of  $D$.
\end{itemize}
This example originates from the paper \cite{Park2017}, where the image of $\mathfrak{G}(\h)$ represents metal artifacts of CT in the presence of metallic objects occupied in the region $D_\textbf{h}$. Fig. \ref{FigMA} shows images on the manifold $\mathcal{M}_{\mbox{\tiny image}}$.

Assuming that $\mathfrak{G}$ is known, consider the highly underdetermined problem  \eqref{undersampled} to find the following reconstruction map:
\begin{equation} \label{fmap}
f: \x \in \mathcal{M}_{\mbox{\tiny image}}^\prime \mapsto \y \in \mathcal{M}_{\mbox{\tiny image}} \mbox{ satisfying } \A^\sharp \A \y = \x
\end{equation}
where
\begin{equation} \label{ULMC}
\mathcal{M}_{\mbox{\tiny image}}^\prime := \{ ~ \A^\sharp \A \mathfrak{G}(\h) ~ : ~ {\textbf{h}} \in \mathcal K ~ \}
\end{equation}
If  $m$ (the number of equations) is greater than seven, it is possible to find $f$ and this $f$ can be obtained as follows:
\begin{equation} \label{SPUD}
f(\x) =\mathfrak{G}(\h), ~~~~ \h=\underset{\textbf{h} \in \mathcal K}{\mbox{argmin}} ~ \| \A^\sharp \A \mathfrak{G}(\h) - \x \|_{\ell^2}^2
\end{equation}
However, we do not know $\mathfrak{G}$ in practice.

In the remaining part of this section, we examine the capability of various methods for finding a reconstruction map $f:\x\mapsto \y$ in \eqref{fmap} using the sparse-view CT model described in Section \ref{sec-sparseCT}.  Let $\{(\x^{(k)},\y^{(k)})\}_{k=1}^{\mathfrak{n}_{\mbox{\tiny data}}}$ denote a training data set.

\begin{figure*}[t!]
\centering
\includegraphics[width=0.9\textwidth]{./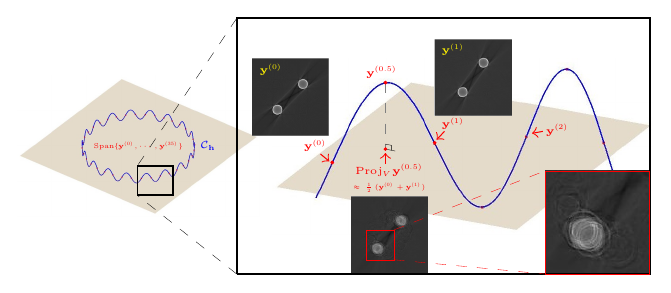}
\caption{Linear dimensionality reduction methods such as PCA may provide a poor approximation of the highly curved manifold in \eqref{Curve}. Let $V = \mbox{span}\{y^{(k)}\}_{k=0}^{35}$ 
	where $y^{(k)}$ is the $k\pi/18$ degree rotated image of the image $\y^{(0)}$. Let $y^{(0.5)}$ be the $\pi/36$ degree rotated image of the image $\y^{(0)}$. The projection of $y^{(0.5)}$ onto $V$ is approximately equal to $\frac{1}{2} (y^{(0)} + y^{(1)})$ that destroys the main characteristics of $y^{(0.5)}$.}
\label{FigPCA}
\end{figure*}
\subsubsection{Linear projection approach}
This subsection explains that there may not exist an appropriate low-dimensional linear projection that captures the variations in $\mathfrak{G}(\h)$.
Principal component analysis (PCA) is widely used for the dimensionality reduction in which the unknown manifold $\mathcal{M}_{\mbox{\tiny image}}$ is approximated by a linear subspace spanned by the set of principal components $\{\d_k\}_{k=1}^{\mathfrak{n}_{\mbox{\tiny PCA-basis}}}$. To be precise, the first principal component, $\d_1$, is obtained as follows:
\begin{equation}
\d_1 = \underset{|\textbf{d}|=1}{\mbox{argmax}} ~ \textbf{d}^T \textbf{Y}^T \textbf{Y} \textbf{d}
\end{equation}
where $\textbf{Y} := (\textbf{y}_1,\textbf{y}_2,\cdots,\textbf{y}_{\mathfrak{n}_{\mbox{\tiny data}}} )^T$. Similarly, the second principal component, $\d_2$, is obtained by computing the first principal component of matrix $\textbf{Y}_1 := \textbf{Y} - \textbf{d}_{1}\textbf{d}_{1}^T$. Continuing this process, we obtain the orthogonal basis $\{\d_k\}_{k=1}^{\mathfrak{n}_{\mbox{\tiny PCA-basis}}}$.
Subsequently, the reconstruction map $f$ is given by
\begin{equation} \label{SPUD2}
f(\x) =\textbf{D}\h, ~~~~ \h=\underset{\textbf{h}}{\mbox{argmin}} ~ \| \A^\sharp \A \textbf{D} \textbf{h} - \x \|_{\ell^2}^2
\end{equation}
where $\D$ denotes the matrix whose columns are $\{\d_k\}_{k=1}^{\mathfrak{n}_{\mbox{\tiny PCA-basis}}}$.

Fig. \ref{FigPCA} depicts that PCA fails to provide satisfactory approximations of images in the unknown manifold $\mathcal{M}_{\mbox{\tiny image}}$, because the low dimensional subspace spanned by the principal components cannot sufficiently cover the nonlinearity of the solution manifold. In Fig. \ref{FigPCA}, $\mathcal{C}_{\h}$ represents the following one-dimensional curve lying on the manifold $\mathcal{M}_{\mbox{\tiny image}}$:
\begin{equation} \label{Curve}
\mathcal{C}_{\h} := \{ \mathfrak{G}(\mathcal{T}_\theta \h) :  0<\theta\le 2\pi\}, \mathcal{T}_\theta=\begin{pmatrix}
\mathcal{R}_\theta & 0  & 0 \\ 0 & \mathcal{R}_\theta & 0\\
0&0 & {\bf I}
\end{pmatrix}
\end{equation}{\tiny }
where $\mathcal{R}_\theta$ is the rotation matrix with angle $\theta$, $ {\bf I}$ is $3\times 3$ identity matrix, and $0$ here denotes the corresponding zero matrix. The plane in Fig. \ref{FigPCA} represents the 36-dimensional space spanned by the sampled images $\{\y^{(k)}\}_{k=0}^{35}$, which are sampled at $\theta=\f{k}{36}2\pi,~k=0,\cdots,35$, on the curve $\mathcal{C}_{\h}$. Although $\mathcal{C}_{\h}$ is the map of the simple circle $\{ \mathcal{T}_\theta \h : 0 < \theta \leq 2\pi \}$ (in the latent space) through the generator function $\mathfrak{G}$, it is highly curved and complex due to the severe nonlinearity of function $\mathfrak{G}$. Therefore, with the limited expressivity of PCA \cite{Poole2016}, one cannot adequately approximate the curve $\mathcal{C}_{\h}$ by using the plane spanned by $\{\y^{(k)}\}_{k=0}^{35}$.

Fig. \ref{FigMPCT} depict that linear approaches, including PCA, in solving the underdetermined problem \eqref{undersampled} result in the significant loss of information from the original image. The inability of the linear projection approach to provide the global approximation of the highly curved image manifold is the reason for such poor reconstruction results.

\begin{figure*}[t!]
\centering
\includegraphics[width=0.95\textwidth]{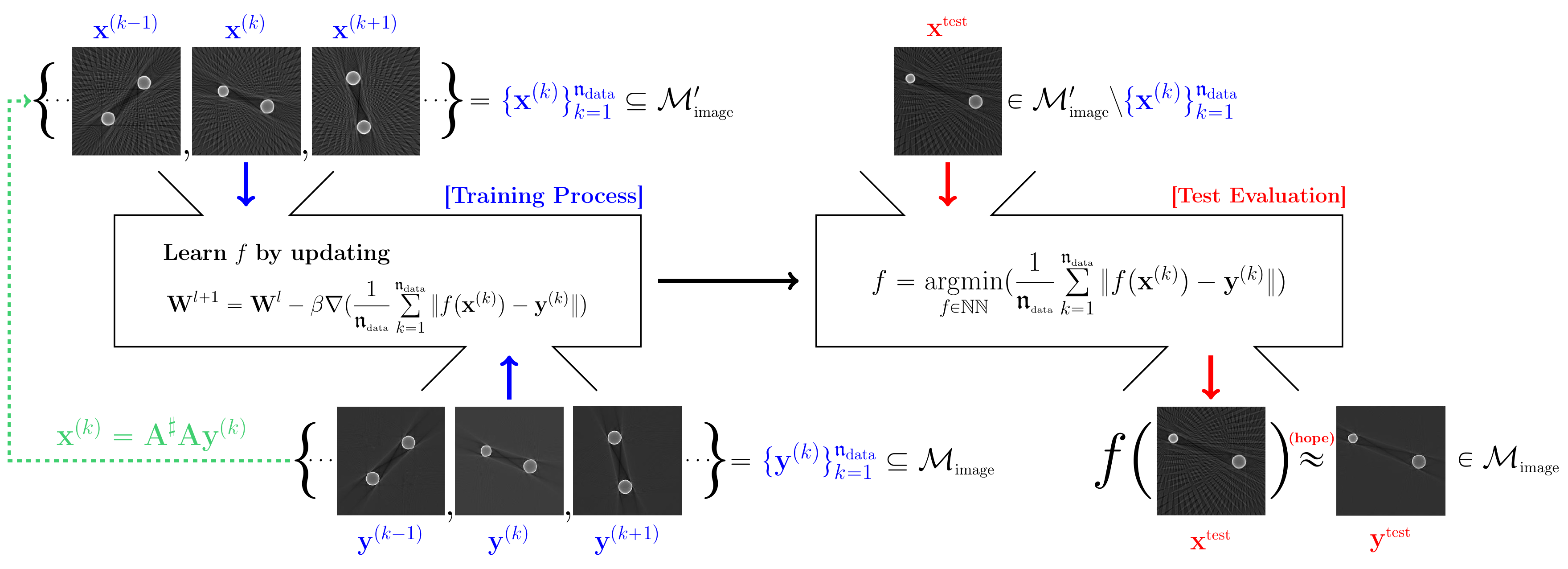}
\caption{Deep learning framework for estimating the reconstruction map $f : \x \mapsto \y$. For a given deep learning network and training dataset $\{\x^{(k)},\y^{(k)}\}_{k=1}^{\mathfrak{n}_{\mbox{\tiny data}}}$, the reconstruction map $f$ is obtained by minimizing the discrepancy between the network outputs $\{f(\x^{(k)})\} _{k=1}^{\mathfrak{n}_{\mbox{\tiny data}}}$ and the corresponding labels $\{\y^{(k)}\} _{k=1}^{\mathfrak{n}_{\mbox{\tiny data}}}$. Despite the finite number of training data, we hope the function $f$ to provide an accurate approximation of the test label $\y^{\mbox{\tiny test}}$ for any unobserved test data $\x^{\mbox{\tiny test}} \in \mathcal{M}_{\mbox{\tiny image}}^\prime \setminus \{\x^{(k)}\}_{k=1}^{\mathfrak{n}_{\mbox{\tiny data}}}$.}
\label{FigEPDL}
\end{figure*}	
\subsubsection{Compressed sensing approach} \label{CSsection}
Compressed sensing(CS) is based on the assumption that  $\y\in \mathcal{M}_{\mbox{\tiny image}}$ has sparse representation under a basis $\{\d_k\}_{k=1}^{\mathfrak{n}_{\mbox{\tiny CS-basis}}}$, i.e.,
\begin{equation} \label{sparse1}
\y = \textbf{D} \textbf{h} ~\ \mbox{   s.t.  } ~\| \textbf{h}\|_{\ell_0} \ll \mathfrak{n}_{\mbox{\tiny CS-basis}}
\end{equation}
where $\textbf{D}$ is a matrix whose $k$-th column corresponds to $\d_k$ and $\|\h\|_{\ell_0}$ is the number of non-zero entries of $\h$. In CS, $\ell^1$ convex relaxation methods are widely used to make the problem computationally feasible. A sparse approximation to the solution of the underdetermined problem \eqref{undersampled} is obtained as follows:
\begin{equation} \label{sparse3}
f(\x) =\textbf{D}\h, ~~\h=\underset{\textbf{h}}{\mbox{argmin}} ~ \| \A^\sharp \A \textbf{D} \textbf{h} - \x \|_{\ell^2}^2 + \lambda \| \h \|_{\ell^1}
\end{equation}
where $\lambda$ is a regularization parameter that controls the trade-off between data fidelity and the regularity enforcing the sparsity of $\h$. Kindly refer to \cite{Donoho2003,Donoho2006,Candes2006b,Candes2008,Bruckstein2009} for additional details. We implement the CS technique by using several wavelet bases, which are efficient in CS applications for natural images \cite{Daubechies2004,Mallat2009}. However, the reconstruction results from Fig. \ref{FigMPCT} show that some details are not preserved in the CS process.

The total variation(TV)-based CS method imposes a sparsity of the image gradient, where $f(\x)$ can be obtained as follows:
\begin{equation}\label{TV-min}
f(\x)=\underset{\textbf{y}}{\mbox{argmin}} ~ \| \A^\sharp \A \y - \x \|_{\ell^2}^2 + \lambda \| \nabla \y \|_{\ell^1}
\end{equation}
Fig. \ref{FigMPCT} shows that TV-based CS method also eliminates some of the details. TV method does not selectively preserve the streaking feature lying between two disks, while removing the other artifacts.

Dictionary learning \cite{Olshausen1996,Aharon2006} utilizes the given training data to find a (redundant and data-driven) basis $\{\d_k\}_{k=1}^{\mathfrak{n}_{\mbox{\tiny dic}}}$ that can represent every $\y \in \mathcal{M}_{\mbox{\tiny image}}$ as a sparse vector. A learned dictionary can handle a specific problem considerably better than analysis-driven dictionaries (e.g. wavelet and framelet) \cite{Elad2006,Aharon2008,Mairal2008,Yang2010,Rubinstein2010}. Dictionary learning approaches have a drawback in dealing with high-dimensional data due to the huge computational complexity; hence, the patch-based approach (e.g. image patch of size $8\times 8$ pixels) has been adopted in most image processing applications. However, this approach might not be fit for the tasks for which the global information should be sufficiently incorporated.

\begin{figure*}[t!]
\centering
\includegraphics[width=0.95\textwidth]{./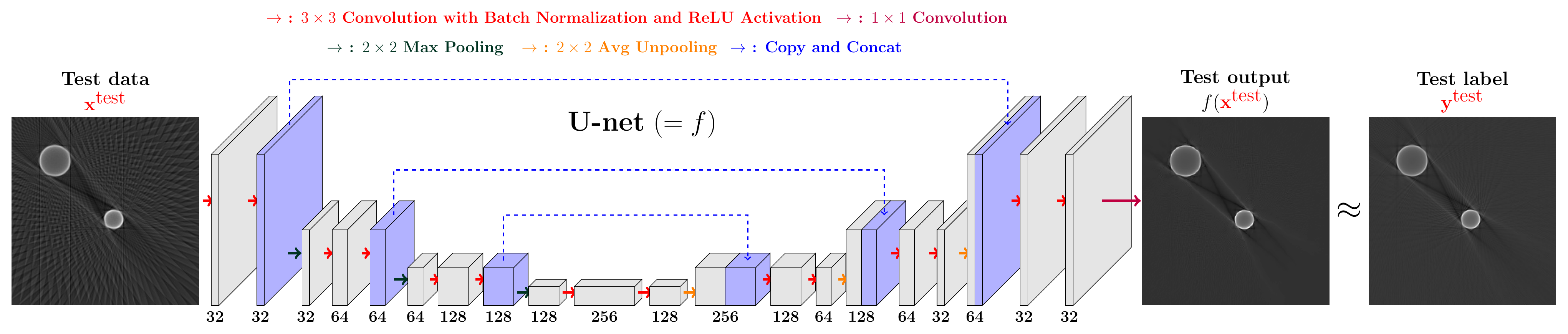}
\caption{Deep learning approach using U-net for solving the sparse-view CT model with the special solution manifold. By training U-net with 800 data pairs, the reconstruction function $f$ is obtained. In Tensorflow environment, the minimization process was performed by using Batch normalization and Adam optimizer with learning rate 0.001, mini-batch size 16, and 1000 epochs. For a given test data $\x^{\mbox{\tiny test}}$, the reconstruction function $f$ provides the test output $f(\x^{\mbox{\tiny test}})$ which approximates the test label $\y^{\mbox{\tiny test}}$.}
\label{FigMPDL}
\end{figure*}
\subsubsection{Deep learning approach}
Deep learning techniques expand our ability to solve underdetermined problems via sophisticated learning process by using group-data fidelity of the training data; furthermore, they appear to effectively deal with the limitations of the existing mathematical methods in handling various ill-posed problems. In CS, $f(\x)$ in \eqref{TV-min} can be viewed as a solution of the nonlinear Euler-Lagrange equation associated with the trade-off between two separative competitive objectives of maximizing the ``single data fidelity" and minimizing TV (as a sparse prior of natural images). However, this sparse prior may not be appropriate for preserving small features that contain clinically useful information. In contrast, the deep learning approach \eqref{MDL} utilizes ``group-data fidelity" to estimate the reconstruction map $f : \x \mapsto \y$ by seemingly probing the relationship between unknown manifolds $\mathcal M_{\mbox{\tiny image}}$ and $\mathcal M_{\mbox{\tiny image}}'$.  The reconstruction $f$ is obtained by minimizing the group-data discrepancy $\sum_{k=1}^{\mathfrak{n}_{\mbox{\tiny data}}}\| f(\x^{(k)}) - \y^{(k)} \|$ (i.e. maximizing the group-data fidelity) on a finite number of training pairs $\{(\x^{(k)}, \y^{(k)})\}_{k=1}^{\mathfrak{n}_{\mbox{\tiny data}}}$ lying on $\mathcal M_{\mbox{\tiny image}}^\prime \times \mathcal M_{\mbox{\tiny image}}$, as shown in Fig. \ref{FigEPDL}.

In particular, U-net \cite{Ronneberger2015} has achieved enormous success in finding the map for various underdetermined medical imaging problems \cite{Hyun2018,Jin2017,Han2017}. In U-net, the network architecture of $f$ comprises a contraction path $\Phi : \x \mapsto \h$ and an expansion path $\Psi : \h \mapsto \y$; $f(\x) = \Psi \circ \Phi(\x)$. To be precise, the simplest form of the contraction path $\Phi$ is expressed by
\begin{equation}
\h = \Phi(\x) = \sigma(\textbf{w}_3 \circledast \mathcal P(\sigma(\textbf{w}_2 \circledast \sigma(\textbf{w}_1 \circledast \x + \c_1 )  + \c_2 ))+\c_3)
\end{equation}
and the corresponding expansive path $\Psi$ is represented as
\begin{equation}
\Psi(\h) = \textbf{w}_6 \odot ( \sigma( \textbf{w}_5 \circledast (\mathcal C_{cat} ( \mathcal U_{pool} ( \sigma(\textbf{w}_4 \circledast \h + \c_4 ) ),  \z ) ) + \c_5 ) )
\end{equation}
where $\z = \sigma(\textbf{w}_2 \circledast \sigma(\textbf{w }_1 \circledast \x + \c_1) + \c_2)$. Here, $\sigma(z) = \mbox{ReLU}(z)$, $\mathcal P$ is a pooling operator, $\mathcal U_{pool}$ is an unpooling operator, and $\mathcal C_{cat}$ is a concatenation operator. The work in \cite{Ronneberger2015} can be referred for a more detailed description. The overall structure of U-net is shown in Fig. \ref{FigMPDL}.

The map $f : \x \mapsto \y$, as a function of parameters $\textbf{W} = \{ \textbf{w}_1, \textbf{c}_1, \textbf{w}_2, \textbf{c}_2, \cdots \}$, is determined as follows:
\begin{equation} \label{DLEQ}
f = \underset{f \in \mathbb{N}\mathbb{N}}{\mbox{argmin}} ~ \dfrac{1}{\mathfrak{n}_{\mbox{\tiny data}}}\sum_{k=1}^{\mathfrak{n}_{\mbox{\tiny data}}}\| f(\x^{(k)}) - \y^{(k)} \|_{\ell^2}^2
\end{equation}
where $\mathbb{N}\mathbb{N}$ denotes a set of all the functions of the form $f = \Psi \circ \Phi$ that vary with $\textbf{W}$.  Fig. \ref{FigMPCT} shows remarkable performance of U-net.

\section{Solvability of Underdetermined Linear System} \label{sec-Solvability}
\begin{figure*}[t!]
\begin{center}
	\includegraphics[width=0.8\textwidth]{./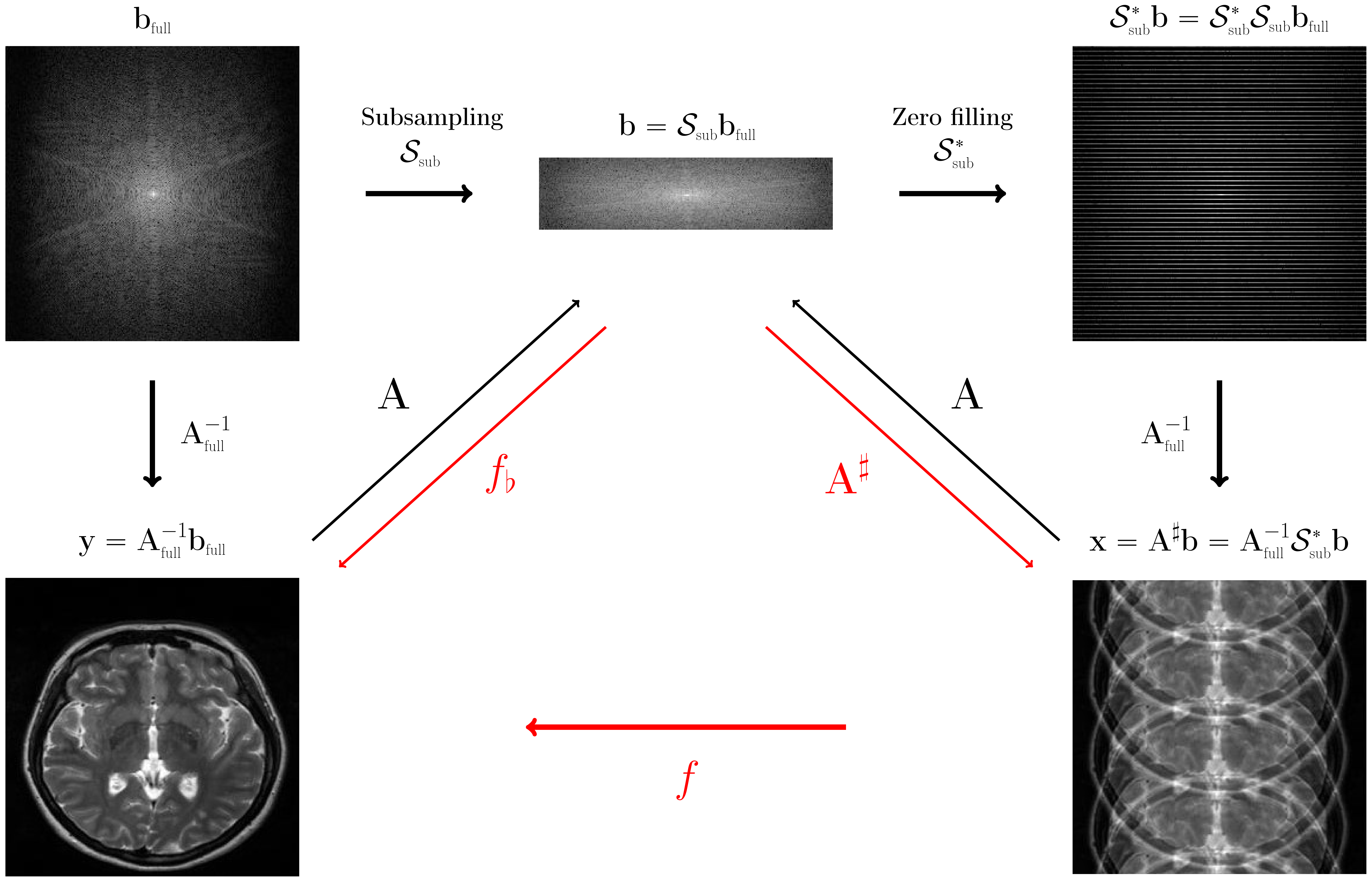}
	\caption{Undersampled MRI problem is to recover an image $\y=\A^{-1}_{\mbox{\tiny full}}\b_{\mbox{\tiny full}}$ from undersampled data $\b=\mathcal S_{\mbox{\tiny sub}} \b_{\mbox{\tiny full}}$, where $\mathcal S_{\mbox{\tiny sub}}$ is a subsampling operator and $\b_{\mbox{\tiny full}}$ is a fully-sampled data in the sense of Nyquist sampling. All images are displayed by taking their absolute values. Using the deep learning technique, we attempt to find a reconstruction function $f$ that maps from $\x = \A^\sharp \b$ to $\y$. Since the structure of $\x$ is determined by the subsampling operator $\mathcal S_{\mbox{\tiny sub}}$, the learning $f$ can be affected by the subsampling strategy.}
	\label{Fig-under-MRI1}
\end{center}
\end{figure*}

In undersampled problems, the subsampling strategy $\mathcal{S}_{\mbox{\tiny sub}}$ inside $\A = \mathcal{S}_{\mbox{\tiny sub}}\A_{\mbox{\tiny full}}$ is important for the uniqueness of solution $\y$ on the manifold $\mathcal{M}_{\mbox{\tiny image}}$ among all the possible solutions in $\mathcal N_{\textbf{b}}(\A)$. Precisely, a proper subsampling strategy $\mathcal{S}_{\mbox{\tiny sub}}$ is related to the following manifold restricted isometry property (RIP) condition. The matrix $\A$ associated with  $\mathcal{S}_{\mbox{\tiny sub}}$ is said to satisfy the $\mathcal{M}$-RIP condition if there exists a constant $c \in (0,1]$ such that
\begin{equation}\label{MRIP}
c \| \y-\y'\| \le  \| \A \y - \A \y'\|\le \f{1}{c} \|\y-\y'\| ~\mbox{for all } \y,\y' \in \mathcal{M}_{\mbox{\tiny image}}.
\end{equation}

The following two observations explain the necessary condition for constructing a suitable subsampling strategy:
\begin{obs}\label{lemOTO}
If $\A$ satisfies the $\mathcal{M}$-RIP condition in \eqref{MRIP}, then
\begin{equation}\label{A-sampling}
\A^\sharp \A : \mathcal{M}_{\mbox{\tiny image}} \mapsto \mathcal{M}^\prime_{\mbox{\tiny image}}~~ \mbox{is one-to-one.}
\end{equation}	
\end{obs}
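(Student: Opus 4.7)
The plan is to reduce the question of injectivity of $\A^\sharp \A$ on $\mathcal{M}_{\mbox{\tiny image}}$ to two independent facts: that $\A^\sharp$ itself is injective, so it may be cancelled from the left, and that the lower bound in the $\mathcal{M}$-RIP inequality promotes ``$\A\y=\A\y'$'' to ``$\y=\y'$'' whenever both points lie on the manifold. First I would take $\y,\y'\in\mathcal{M}_{\mbox{\tiny image}}$ with $\A^\sharp\A\y=\A^\sharp\A\y'$ and unpack $\A^\sharp=\A_{\mbox{\tiny full}}^{-1}\mathcal{S}_{\mbox{\tiny sub}}^*$. Since $\A_{\mbox{\tiny full}}^{-1}$ is an invertible linear map on the full data space, $\A^\sharp\A\y=\A^\sharp\A\y'$ is equivalent to $\mathcal{S}_{\mbox{\tiny sub}}^*\A\y=\mathcal{S}_{\mbox{\tiny sub}}^*\A\y'$, and I would observe that $\mathcal{S}_{\mbox{\tiny sub}}^*$ is the zero-padding adjoint of the subsampling operator $\mathcal{S}_{\mbox{\tiny sub}}$: it embeds a vector indexed by the sampled set back into the full index set with zeros outside. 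Such an embedding is trivially injective, so $\A^\sharp$ is injective as a composition of an invertible and an injective map, yielding $\A\y=\A\y'$.

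Second, I would invoke the lower bound in the $\mathcal{M}$-RIP condition \eqref{MRIP}. Having reduced to $\A\y=\A\y'$, the estimate
\begin{equation}
c\,\|\y-\y'\| \;\le\; \|\A\y-\A\y'\| \;=\; 0
\end{equation}
forces $\|\y-\y'\|=0$ because $c\in(0,1]$ is strictly positive, and therefore $\y=\y'$. This is exactly the one-to-one property of $\A^\sharp\A$ from $\mathcal{M}_{\mbox{\tiny image}}$ onto its image $\mathcal{M}^\prime_{\mbox{\tiny image}}$, which by definition \eqref{ULMC} is the forward image of $\mathcal{M}_{\mbox{\tiny image}}$ under $\A^\sharp\A$.

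There is no real obstacle once the definition of $\A^\sharp$ is unfolded; the one potential source of slippage is conflating ``injectivity on the manifold'' with ``injectivity on the ambient $\mathbb{R}^n$'', which would be false because $\A^\sharp\A$ has a large kernel in the ambient space whenever $\A$ is underdetermined. It is essential that the hypothesis \eqref{MRIP} is asked only for pairs $\y,\y'\in\mathcal{M}_{\mbox{\tiny image}}$, which is precisely what allows the argument to go through while still being compatible with $\A$ being rank-deficient. Note also that only the lower inequality in \eqref{MRIP} is used for injectivity; the upper inequality is not needed here, but it will be required later to guarantee that the inverse map $(\A^\sharp\A)^{-1}:\mathcal{M}^\prime_{\mbox{\tiny image}}\to\mathcal{M}_{\mbox{\tiny image}}$ is stably defined, which is the content of the subsequent discussion around Figure \ref{Fig-main diagram}.
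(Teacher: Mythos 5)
Your proposal is correct and follows essentially the same route as the paper's proof: unfold $\A^\sharp = \A_{\mbox{\tiny full}}^{-1}\mathcal{S}_{\mbox{\tiny sub}}^*$, cancel the invertible factor and the (norm-preserving, hence injective) zero-padding to deduce $\A\y=\A\y'$, and then apply the lower $\mathcal{M}$-RIP bound to conclude $\y=\y'$. The only cosmetic difference is that the paper phrases it as a contradiction while you argue directly.
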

\begin{proof}
Suppose that there are two different $\y$ and $\y^\prime$ such that $\A^\sharp \A \y = \A^\sharp \A \y^\prime$. Since $\A^\sharp = \A_{\mbox{\tiny full}}^{-1} \mathcal S_{\mbox{\tiny sub}}^*$,
$$
0 = \| \A_{\mbox{\tiny full}}(\A^\sharp \A \y - \A^\sharp \A \y^\prime) \| = \| \mathcal S_{\mbox{\tiny sub}}^*(\A\y-\A\y^\prime) \| \nonumber
$$
$$
= \| \A\y - \A\y^\prime \| \geq c \| \y - \y^\prime \|
$$
where the last inequality follows from \eqref{MRIP}. Hence, $\y - \y^\prime =0$, which contradicts the assumption.
\end{proof}

\begin{figure*}[t!]
\begin{center}
	\includegraphics[width=0.86\textwidth]{./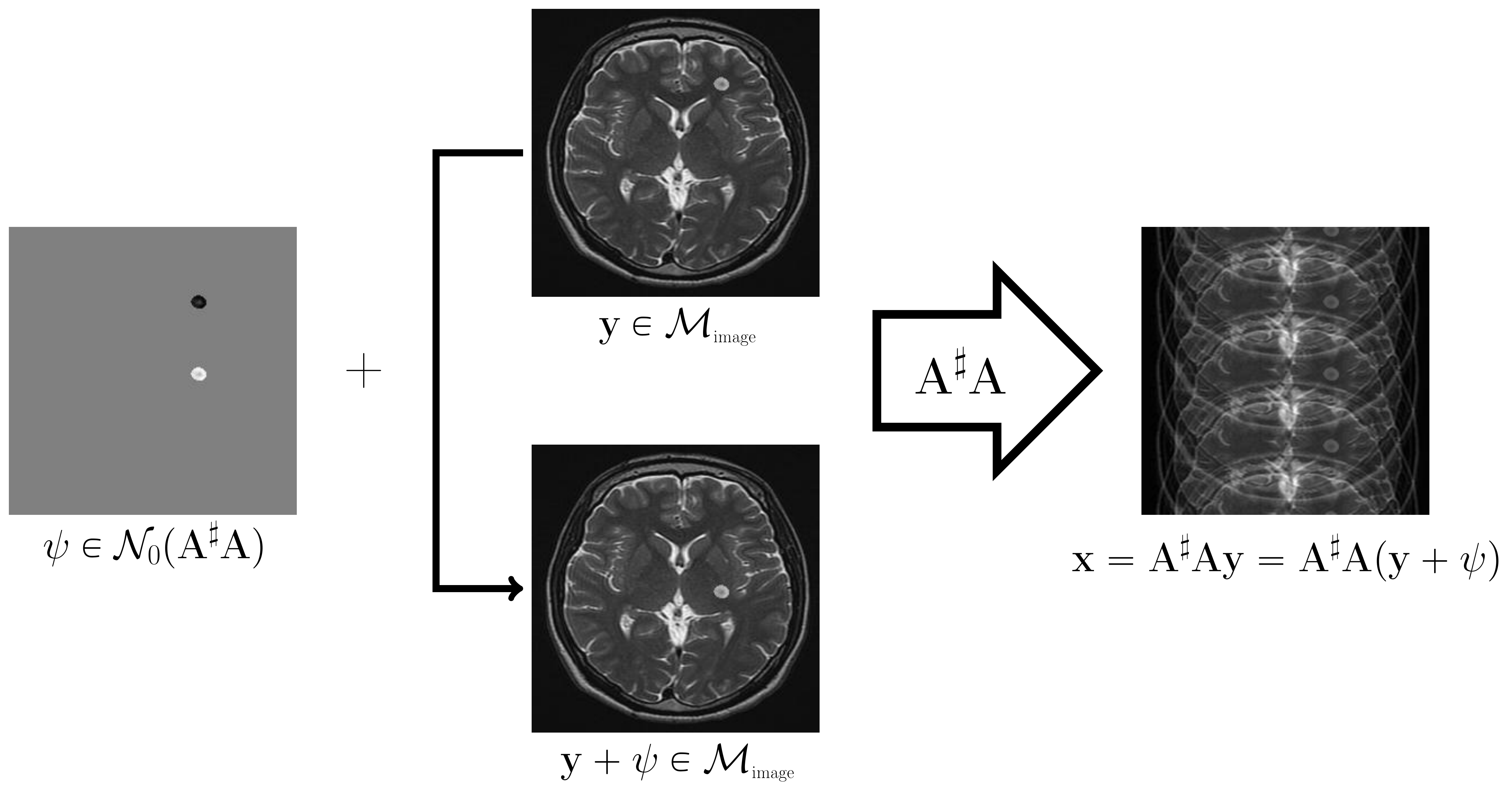}
	\caption{Location uncertainty on the solution manifold when using a uniform subsampling with factor 4; let us consider two different MR images, $\y$ and $\y + \psi$, where the location of a small anomaly is only different. When taking $\textbf{A}^{\sharp}\textbf{A}$ to the images, we obtain the same output $\x = \textbf{A}^{\sharp}\textbf{A}\y = \textbf{A}^{\sharp}\textbf{A}(\y + \psi)$, where $\textbf{A}$ is given by $\textbf{A}=\mathcal S_{\mbox{\tiny sub}}\textbf{A}_{\mbox{\tiny full}}$ and $\mathcal S_{\mbox{\tiny sub}}$ denotes a uniform subsampling with factor 4.}
	\label{UniformSamplingreal}
\end{center}
\end{figure*}

\begin{obs}\label{obsMRIP}
The reconstruction map $f : \x\in\mathcal{M}^\prime_{\mbox{\tiny image}}  \mapsto \y\in \mathcal{M}_{\mbox{\tiny image}}$ is learnable if $\A$ satisfies the $\mathcal{M}$-RIP condition.
\end{obs}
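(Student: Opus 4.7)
The plan is to combine Observation~\ref{lemOTO}, which already produces a well-defined set-theoretic inverse, with a quantitative bi-Lipschitz estimate derived from the $\mathcal{M}$-RIP hypothesis, and then invoke a universal approximation argument for the network class $\mathbb{NN}$ to upgrade existence of a regular target map to learnability from training data. I would first make ``learnable'' precise in the following minimal sense: there exists a target map $f^*:\mathcal{M}_{\mbox{\tiny image}}^\prime\to\mathcal{M}_{\mbox{\tiny image}}$ with $f^*(\A^\sharp\A\y)=\y$ for every $\y\in\mathcal{M}_{\mbox{\tiny image}}$, and this $f^*$ is regular enough (continuity on a compact domain will suffice) that, for every $\epsilon>0$, some $f\in\mathbb{NN}$ approximates $f^*$ uniformly within $\epsilon$ on $\mathcal{M}_{\mbox{\tiny image}}^\prime$. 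Observation~\ref{lemOTO} hands us $f^*$ as a set-theoretic inverse; the real content is to show $f^*$ is Lipschitz.

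For the bi-Lipschitz estimate, I would repeat the chain of equalities from the proof of Observation~\ref{lemOTO} without collapsing everything to zero. Since $\A^\sharp=\A_{\mbox{\tiny full}}^{-1}\mathcal{S}_{\mbox{\tiny sub}}^*$ with $\A_{\mbox{\tiny full}}$ unitary (inverse Fourier in MRI, normalized FBP in CT) and $\mathcal{S}_{\mbox{\tiny sub}}^*$ a zero-padding isometry, one has
\begin{equation*}
\|\A^\sharp\A\y-\A^\sharp\A\y'\|=\|\mathcal{S}_{\mbox{\tiny sub}}^*(\A\y-\A\y')\|=\|\A\y-\A\y'\|.
\end{equation*}
Combined with \eqref{MRIP} this gives $c\|\y-\y'\|\le\|\A^\sharp\A\y-\A^\sharp\A\y'\|\le c^{-1}\|\y-\y'\|$ for all $\y,\y'\in\mathcal{M}_{\mbox{\tiny image}}$, so the restriction $\A^\sharp\A|_{\mathcal{M}_{\mbox{\tiny image}}}$ is bi-Lipschitz and its inverse $f^*$ is $(1/c)$-Lipschitz on $\mathcal{M}_{\mbox{\tiny image}}^\prime$.

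To close the learnability step I would extend $f^*$ to a Lipschitz function on an ambient compact box via a McShane--Whitney or Kirszbraun extension, and then apply a universal approximation theorem for $\mathbb{NN}$: continuous functions on compact sets are uniformly approximable by sufficiently wide and deep ReLU networks, so for any $\epsilon>0$ some $f\in\mathbb{NN}$ satisfies $\sup_{\x\in\mathcal{M}_{\mbox{\tiny image}}^\prime}\|f(\x)-f^*(\x)\|<\epsilon$; standard generalization arguments that exploit the Lipschitz constant $1/c$ and the bounded Hausdorff dimension $\mathfrak{d}_{\mbox{\tiny mfd}}$ from [H1] then show the empirical minimizer in \eqref{MDL} converges to $f^*$ as $\mathfrak{n}_{\mbox{\tiny data}}\to\infty$. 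The main obstacle is conceptual rather than computational: the word ``learnable'' is never formalized in the excerpt, so the argument hinges on adopting a definition strong enough to be useful yet weak enough to follow from the Lipschitz regularity of $f^*$. Any genuinely quantitative refinement of Observation~\ref{obsMRIP}---relating $1/c$, $\mathfrak{d}_{\mbox{\tiny mfd}}$, the network capacity, and the sample size into a true sample-complexity bound---is the delicate point that this qualitative statement sidesteps.
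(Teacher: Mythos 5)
Your proposal is correct, but it does considerably more than the paper does. The paper offers no formal proof of Observation~\ref{obsMRIP}: it implicitly identifies ``learnable'' with ``well-defined as a single-valued map,'' which follows immediately from the injectivity of $\A^\sharp\A$ on $\mathcal{M}_{\mbox{\tiny image}}$ established in Observation~\ref{lemOTO}, and then only spells out the contrapositive (if $\mathcal{M}$-RIP fails, two distinct $\y_1\neq\y_2$ collapse to the same $\x$, so no function can recover both). Your route keeps the same first step --- rerunning the chain of equalities from the proof of Observation~\ref{lemOTO} without setting everything to zero --- but then extracts the quantitative content the paper discards: the restriction $\A^\sharp\A|_{\mathcal{M}_{\mbox{\tiny image}}}$ is bi-Lipschitz, so $f^*$ is Lipschitz, extendable by McShane--Whitney, and hence uniformly approximable by the network class. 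This buys you an actual argument that the target map lies within reach of $\mathbb{NN}$, which the paper's injectivity-only reading cannot provide (a merely injective map on a manifold could have a wildly discontinuous inverse, and then no finite training set would constrain it). Two caveats: your identity $\|\A^\sharp\A\y-\A^\sharp\A\y'\|=\|\A\y-\A\y'\|$ needs $\A_{\mbox{\tiny full}}$ unitary, which is exact for the normalized DFT in MRI but only approximate for FBP in CT; since $\A_{\mbox{\tiny full}}$ is invertible on a finite-dimensional space the bi-Lipschitz conclusion survives with constants depending on its condition number, so this is cosmetic. And your final step --- that the empirical minimizer of \eqref{MDL} converges to $f^*$ --- is asserted rather than proved and would require capacity control on $\mathbb{NN}$; you correctly flag this as the delicate point, and it is precisely the point the paper sidesteps by never defining ``learnable.''
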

If the corresponding matrix $\A$ does not satisfy the $\mathcal{M}$-RIP condition, there exist $\y_1 \neq \y_2$ such that $\x=\A^\sharp \A\y_1= \A^\sharp \A\y_2$; therefore, it is impossible to learn such $f$ due to indistinguishability. The issue of learnability associated with $\A$ that does not satisfy the $\mathcal{M}$-RIP condition will be addressed in Section \ref{SolveUMRI} with a concrete example.

Given a highly undersampling operator $\mathcal{S}_{\mbox{\tiny sub}}$, the map $f$ can be viewed as an image restoration function with filling-in missing data or unfolding image data; therefore, $f(\x)$ depends on the image structure. The nonlinearity of $f$ is affected by $\mathcal{S}_{\mbox{\tiny sub}}$ and the degree of bending of the manifold $\mathcal{M}_{\mbox{\tiny image}}$. The following observation explains that most problems of solving underdetermined linear systems in medical imaging are highly non-linear.
\begin{obs} Suppose that $\A$ satisfies the $\mathcal{M}$-RIP condition.  Let ${\bf V}_{\mathcal{M}_{\mbox{\tiny image}}}$ be the span of the set $\{\f{\p}{\p h_j} \mathfrak{G}(\h): ~\h\in \mathcal K, j=1,\cdots, \mathfrak{d}_{\mbox{\tiny mfd}}\}$. If $\mbox{dim} ~ {\bf V}_{\mathcal{M}_{\mbox{\tiny image}}}  >  m$, then the reconstruction map $f : \x\in\mathcal{M}^\prime_{\mbox{\tiny image}}  \mapsto \y\in \mathcal{M}_{\mbox{\tiny image}}$ is non-linear.
\end{obs}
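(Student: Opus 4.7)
The plan is to argue by contradiction and reduce the statement to a rank inequality. Suppose $f$ agrees on $\mathcal{M}_{\mbox{\tiny image}}^\prime$ with the restriction of some (affine) linear operator $L : \mathbb{R}^n \to \mathbb{R}^n$. By Observation~\ref{lemOTO}, the $\mathcal{M}$-RIP condition makes $\A^\sharp \A : \mathcal{M}_{\mbox{\tiny image}} \to \mathcal{M}_{\mbox{\tiny image}}^\prime$ bijective, so $f \circ \A^\sharp \A$ equals the identity on $\mathcal{M}_{\mbox{\tiny image}}$. Parametrising with the generator $\mathfrak{G}$ from [H2] converts this into the ambient identity
\begin{equation*}
L \, \A^\sharp \A \, \mathfrak{G}(\textbf{h}) \;=\; \mathfrak{G}(\textbf{h}), \qquad \textbf{h} \in \mathcal K.
\end{equation*}

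Next, I would differentiate in each latent coordinate $h_j$. Since $L$ and $\A^\sharp \A$ are constant matrices, the chain rule yields
\begin{equation*}
(L \, \A^\sharp \A) \, \frac{\partial}{\partial h_j} \mathfrak{G}(\textbf{h}) \;=\; \frac{\partial}{\partial h_j} \mathfrak{G}(\textbf{h}), \qquad j = 1, \dots, \mathfrak{d}_{\mbox{\tiny mfd}},
\end{equation*}
with any constant offset from an affine extension dropping out of the differentiation. Collecting these equations over all admissible $\textbf{h}$ and $j$, the right-hand sides span all of $\mathbf{V}_{\mathcal{M}_{\mbox{\tiny image}}}$ by its very definition, while every left-hand side lies inside $L\bigl(\mathrm{range}(\A^\sharp \A)\bigr)$, whose dimension cannot exceed $\mathrm{rank}(\A^\sharp \A) \le \mathrm{rank}(\A) \le m$. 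Therefore $\dim \mathbf{V}_{\mathcal{M}_{\mbox{\tiny image}}} \le m$, contradicting the hypothesis.

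The main obstacle is interpreting ``linear'' when the domain $\mathcal{M}_{\mbox{\tiny image}}^\prime$ is a curved manifold rather than a vector subspace; I would adopt the natural convention that $f$ is called linear precisely when it extends to an affine map on the ambient space, noting that the argument only uses the image of the extension on $\mathrm{range}(\A^\sharp \A)$, so the particular extension is immaterial. A secondary technical point is that hypothesis [H2] supplies only a bi-Lipschitz bound on $\mathfrak{G}$, not differentiability; however, the very definition of $\mathbf{V}_{\mathcal{M}_{\mbox{\tiny image}}}$ already presupposes partial derivatives of $\mathfrak{G}$, so one should either tacitly assume $\mathfrak{G}$ is $C^1$ or else invoke Rademacher's theorem and perform the differentiation almost everywhere on $\mathcal K$. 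Once these conventions are fixed, the rank-counting step itself is elementary.
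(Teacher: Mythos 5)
Your proposal is correct and follows essentially the same route as the paper: parametrise via the generator, differentiate the identity $f(\A^\sharp\A\,\mathfrak{G}(\h))=\mathfrak{G}(\h)$ in the latent coordinates, and derive a contradiction from $\mathrm{rank}(\A)=m$. The only difference is cosmetic and lies in the final count --- you conclude directly from $\mathbf{V}_{\mathcal{M}_{\mbox{\tiny image}}}\subseteq L\bigl(\mathrm{range}(\A^\sharp\A)\bigr)$ that $\dim \mathbf{V}_{\mathcal{M}_{\mbox{\tiny image}}}\le m$, whereas the paper reaches the same contradiction by counting the dimensions of the eigenspaces $E_1$ and $E_0$ of $\B\A^\sharp\A$; your side remark that [H2] only guarantees a bi-Lipschitz (not differentiable) $\mathfrak{G}$ is a fair caveat that the paper's own proof also glosses over.
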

\begin{proof}
Note that $f$ satisfies  $f(\x)=\y$ with $\x=\A^\sharp\A \y$ for all $\y \in \mathcal{M}_{\mbox{\tiny image}}$.
Since $\mathcal M_{\mbox{\tiny image}}$ is generated by $\mathfrak{G}$, we obtain
\begin{equation}
f(\A^\sharp\A \mathfrak{G}(\h)) = \mathfrak{G}(\h), \q \forall ~ \h \in \mathcal{K}
\end{equation}
Taking gradient with respect to $\h$ on both sides, then
\begin{equation}\label{AAG}
\nabla_{\x} f(\A^\sharp\A \mathfrak{G}(\h)) \A^\sharp \A \nabla_{\h} \mathfrak{G}(\h) = \nabla_{\h}  \mathfrak{G}(\h),\forall \h \in \mathcal{K}
\end{equation}
To derive a contradiction, suppose $f$ is linear; i.e., there exists a fixed matrix $\B \in \mathbb{R}^{n\times n}$ such that  $\nabla_{\x} f (\x) = \textbf{B}$ for all $\x\in \A^\sharp\A\mathcal{M}_{\mbox{\tiny image}}$. Subsequently, \eqref{AAG} becomes
\begin{equation}
\B \A^\sharp \A \nabla \mathfrak{G}(\h) = \nabla \mathfrak{G}(\h), \q \forall ~ \h \in \mathcal{K}
\end{equation}
Hence, denoting the eigenspace of $\textbf{B}\textbf{A}^\sharp \textbf{A}$ corresponding to the eigenvalue $\lambda$ by $E_{\lambda} (\B \A^\sharp \A)$, we have
\begin{equation}
E_{1} (\B \A^\sharp \A) \supseteq {\bf V}_{\mathcal{M}_{\mbox{\tiny image}}}
\end{equation}
and from the assumption on the dimension of ${\bf V}_{\mathcal{M}_{\mbox{\tiny image}}}$,
\begin{equation}
\mbox{dim} ~ E_{1} (\B \A^\sharp \A) > m.
\end{equation}
Since $\mbox{Rank}(\B\A^\sharp \A)\le \mbox{Rank}(\A) = m$,
\begin{equation}
\mbox{dim} ~ E_{0} (\B \A^\sharp \A) \ge n - m.
\end{equation}
This is a contradiction because
\begin{equation}
\mbox{dim} ~ E_{0} (\B \A^\sharp \A)+ \mbox{dim} ~ E_{1} (\B \A^\sharp \A) > n.
\end{equation}
\end{proof}

\begin{figure*}[t!]
\begin{center}
	\includegraphics[width=0.9\textwidth]{./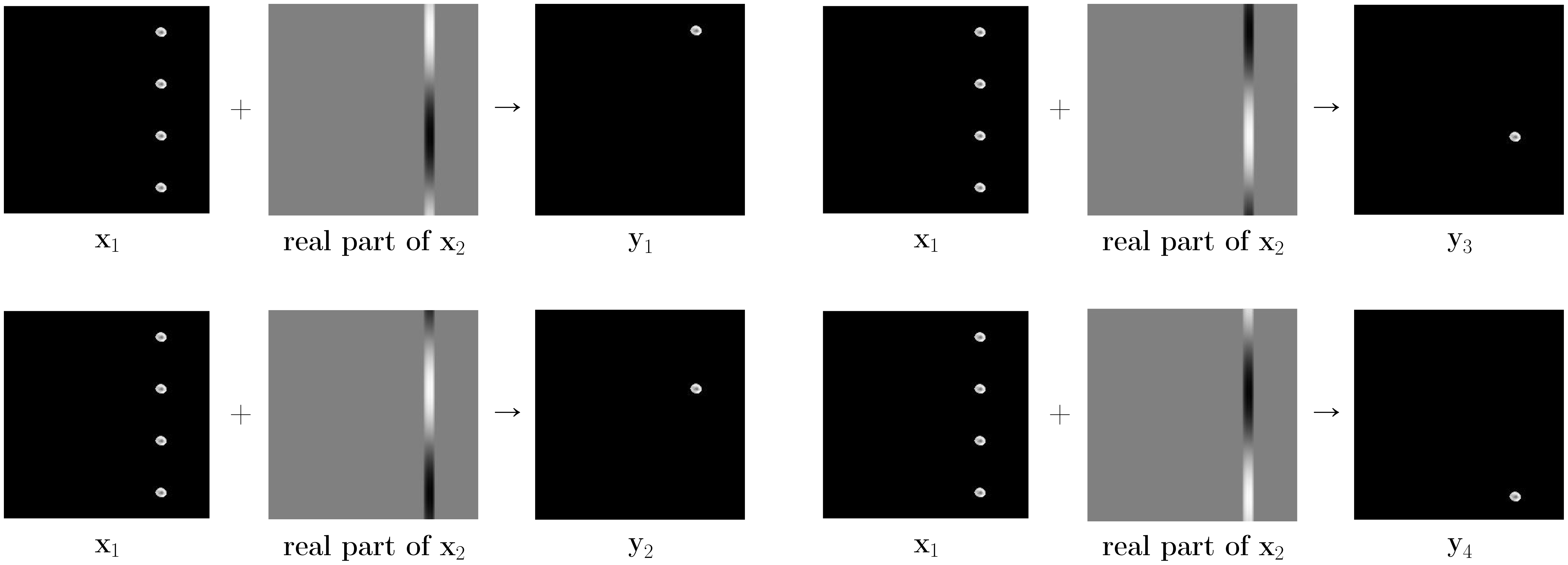}
	\caption{Empirical observation regarding how to eliminate the location uncertainty by adding one phase encoding line in the $k$-space. Four images ($\y_1, \y_2, \y_3,$ and $\y_4$) containing one small anomaly in four different locations generate the same $\x_1$; however, one additional phase encoding line information $\x_2$ can deal with location uncertainty in $\x_1$.}
	\label{SolLU}
\end{center}
\end{figure*}

\subsection{Undersampled MRI} \label{SolveUMRI}
In MRI, we apply an oscillating magnetic field to the imaging object in an MR scanner (being confined in a strong magnetic field) to acquire the $k$-space data ($\b$), which is used to produce a cross-sectional MR image $\y$.
In fully sampled MRI, the relation between a 2D MR image $\y$ and the corresponding fully sampled k-space data
$\b_{\mbox{\tiny{full}}}$ can be expressed in the following form \cite{Seo2014}:
\begin{equation}\label{MR-data-0}
\underbrace{\sum_{a,b=1,\cdots, \sqrt{n}}  ~e^{-2\pi i(a k_1 \Delta k +b k_2 \Delta k)} \y(a,b)}_{\A_{\mbox{\tiny{full}}}\y} ~=~ \b_{\mbox{\tiny{full}}}(k_1, k_2)
\end{equation}
where  $\Delta k$ denotes the Nyquist sampling distance, which is chosen in such a way that
\begin{equation}\label{MR-data-1}
\y(a,b)=\sum_{k_1,k_1=1,\cdots, \sqrt{n}}  ~e^{2\pi i(a k_1 \Delta k +b k_2 \Delta k)} \b_{\mbox{\tiny{full}}}(k_1, k_2).
\end{equation}
In other words, the Nyquist sampling make the problem $\A_{\mbox{\tiny{full}}}\y = \b_{\mbox{\tiny{full}}}$ well-posed so that the standard reconstruction $\y = \A_{\mbox{\tiny{full}}}^{-1} \b_{\mbox{\tiny{full}}}$ can be obtained by 2D discrete inverse Fourier transform.

Assume that the frequency-encoding is along the $k_1$-axis and that the phase-encoding is along the $k_2$-axis in the $k$-space. Noting that the MRI scan time is roughly proportional to the number of time consuming phase-encoding steps in $k$-space, there have been numerous attempts to shorten the MRI scan time by skipping the phase-encoding lines in the $k$-space \cite{Sodickson1997,Haacke1999}.  In the undersampled MRI, we attempt to find the optimal reconstruction function that maps the highly undersampled $k$-space data ($\b$ that violates Nyquist sampling criterion) to an image ($\y$) close to the MR image corresponding to the fully sampled data ($\b_{\mbox{\tiny{full}}}$ that satisfies the Nyquist sampling criterion).

With undersampled data $\b$, the corresponding problem is
\begin{equation}\label{Axb-MRI1}
\underbrace{\mathcal{S}_{\mbox{\tiny sub}}\A_{\mbox{\tiny full}}}_{\A}\u = \underbrace{\mathcal{S}_{\mbox{\tiny sub}}(\b_{\mbox{\tiny full}})}_{\b}
\end{equation}
where $\mathcal{S}_{\mbox{\tiny sub}}$ denotes a subsampling operator and $\b=\mathcal{S}_{\mbox{\tiny sub}}(\b_{\mbox{\tiny full}})$. The image $\x=\A^\sharp\b$ is one of the solutions of \eqref{Axb-MRI1}, because $\A^\sharp$ is the pseudo-inverse of $\A$ in this case. The undersampled MRI problem aims to find an image restoration map $f : \x=\A_{\mbox{\tiny full}}^{-1} \mathcal{S}_{\mbox{\tiny sub}}^*\b \mapsto \y=\A_{\mbox{\tiny{full}}}^{-1} \b_{\mbox{\tiny{full}}}$. See Fig. \ref{Fig-under-MRI1}.

\subsubsection{Uniform subsampling}
According to the Poisson summation formula, the discrete Fourier transform of the uniformly subsampled data with factor 4 produces the following four-folded image \cite{Seo2013}:
\begin{equation}\label{Poisson0}
\x(a,b)=\A^\sharp\A \y=\dfrac{1}{4} \underset{b'\equiv b ~ (\mbox{\scriptsize mod}\f{\sqrt{n}}{4}) }{\sum} \y(a,b')
\end{equation}
where $b'\equiv b ~ (\mbox{\scriptsize mod}\f{\sqrt{n}}{4})$ means that both $b$ and $b'$ leave the same remainder when divided by $\f{\sqrt{n}}{4}$. Unfortunately, there exists an uncertainty that makes it impossible to reconstruct $\y$ from $\x$, and therefore $f$ is not learnable. To see the reason, we consider the following:
\begin{equation}\label{KernelAA}
\Psi_{\mbox{\tiny ufm}}:=\mathcal{N}_0(\A^\sharp\A) \nonumber =\mbox{Span}\{ \psi_{a_*,b_*}^{0,\beta}: a_*, b_*\in \Bbb Z_{\sqrt{n}},  \beta =1,2,3 \}
\end{equation}
where $\Bbb Z_{n}:=\{1, \cdots, n \}$ for any positive integer $n$ and $\psi_{a_*,b_*}^{0,\beta}$ is given by
\begin{equation}\label{Poisson2}
\psi_{a_*,b_*}^{0,\beta}(a,b)=
\left\{
\begin{array}{cl}
1 & \mbox{if } (a,b)=(a_*, b_*) \\
-1 & \mbox{if } (a,b)=(a_*, b_*)+(0, \f{\sqrt{n}}{4} \beta ) \\
0    & \mbox{otherwise}
\end{array}
\right.
\end{equation}
Here, $b_*+ \f{\sqrt{l}}{4} \beta$ should be understood as modulo $\sqrt{n}$.
\begin{obs} \label{ob3}
There exists a non-zero $\psi\in \Psi_{\mbox{\tiny ufm}}$  and $\y\in\mathcal M_{\mbox{\tiny image}}$ such that $\y+\psi\in\mathcal M_{\mbox{\tiny image}}$.
\end{obs}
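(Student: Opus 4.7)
The plan is to prove the statement by explicit construction, exhibiting one realistic image $\y$ and one aliasing atom $\psi\in\Psi_{\mbox{\tiny ufm}}$ whose sum $\y+\psi$ is still realistic. The intuition is the same as the one depicted in Fig.~\ref{UniformSamplingreal}: on the set of realistic MR images, relocating a single small anomaly by exactly the aliasing period $\tfrac{\sqrt{n}}{4}$ along the phase-encoding axis produces another equally realistic image, and the difference of the two images is by definition a single generator $\psi_{a_*,b_*}^{0,\beta}$ of $\Psi_{\mbox{\tiny ufm}}$.

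Concretely, I would proceed as follows. First, pick indices $(a_*,b_*)$ and a displacement $\beta\in\{1,2,3\}$ such that both pixel positions $(a_*,b_*)$ and $(a_*,b_*+\tfrac{\sqrt{n}}{4}\beta)$ (the second coordinate taken modulo $\sqrt{n}$) lie inside the anatomical region of a typical head MR image, and such that generic realistic images can assume either value at those two pixels independently. Second, let $\y\in\mathcal M_{\mbox{\tiny image}}$ be a realistic MR image that carries a tiny isolated unit-intensity anomaly at $(a_*,b_*)$ and takes value $0$ at $(a_*,b_*+\tfrac{\sqrt{n}}{4}\beta)$, and define
\begin{equation}
\y':=\y+\psi_{a_*,b_*}^{0,\beta}.
\end{equation}
By the pointwise definition \eqref{Poisson2}, $\y'$ agrees with $\y$ outside two pixels and differs from it only by relocating the tiny anomaly from $(a_*,b_*)$ to $(a_*,b_*+\tfrac{\sqrt{n}}{4}\beta)$; since the target position was chosen inside the anatomy, $\y'$ is itself a realistic MR image and therefore $\y'\in\mathcal M_{\mbox{\tiny image}}$.

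Setting $\psi:=\y'-\y=\psi_{a_*,b_*}^{0,\beta}$ then yields a nonzero element of $\Psi_{\mbox{\tiny ufm}}$ with the required property. As a sanity check one can verify $\A^\sharp\A\psi=0$ directly from the Poisson summation identity \eqref{Poisson0}: the only two nonzero entries of $\psi$ lie in the same residue class modulo $\tfrac{\sqrt{n}}{4}$ in the second coordinate and carry opposite signs $+1$ and $-1$, so they cancel in the fourfold aliasing sum.

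The only nontrivial step is the realism claim $\y'\in\mathcal M_{\mbox{\tiny image}}$, and this is the main ``obstacle'' in the sense that it is a modeling rather than a computational point: one has to accept that the manifold $\mathcal M_{\mbox{\tiny image}}$ of realistic MR images specified in [H1]--[H2] is rich enough to be closed under relocation of a sub-resolution anomaly by a prescribed number of pixels along the phase-encoding direction. This is exactly the content illustrated empirically by the $\y$ and $\y+\psi$ shown in Fig.~\ref{UniformSamplingreal}, which can be invoked to make the construction concrete.
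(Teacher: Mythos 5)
Your construction is essentially the paper's own justification: the paper offers no formal proof of this observation, only the empirical illustration in Fig.~\ref{UniformSamplingreal} of two realistic MR images that differ by relocating a small anomaly through one aliasing period, and you correctly isolate the only substantive content, namely the modeling assumption that $\mathcal M_{\mbox{\tiny image}}$ is rich enough to contain both images. One cosmetic slip: with your choice of $\y$ (unit anomaly at $(a_*,b_*)$, zero at the shifted pixel), adding $\psi_{a_*,b_*}^{0,\beta}$ doubles the anomaly and creates a $-1$ pixel rather than relocating it, so you should take $\y-\psi_{a_*,b_*}^{0,\beta}$ instead (or place the anomaly at the shifted pixel); this is harmless because $\Psi_{\mbox{\tiny ufm}}$ is a span and hence closed under negation.
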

The observation implies that the $\mathcal{M}$-RIP condition does not hold, as $f$ requires the following contradictory two conditions $f(\textbf{x}) = \y \mbox{ and } f(\textbf{x} ) = \y + \psi$, where $\textbf{x}=\textbf{A}^\sharp\textbf{A}\textbf{y} =\textbf{A}^\sharp\textbf{A}(\textbf{y}+\psi)$. The location of a small anomaly cannot be determined, and, therefore, there are many location uncertainties under the uniform subsampling, as shown in Fig. \ref{UniformSamplingreal}. This is the main reason why $f$ is not learnable under the uniform subsampling.

\begin{figure*}[t!]
\begin{center}
	\includegraphics[width=0.8\textwidth]{./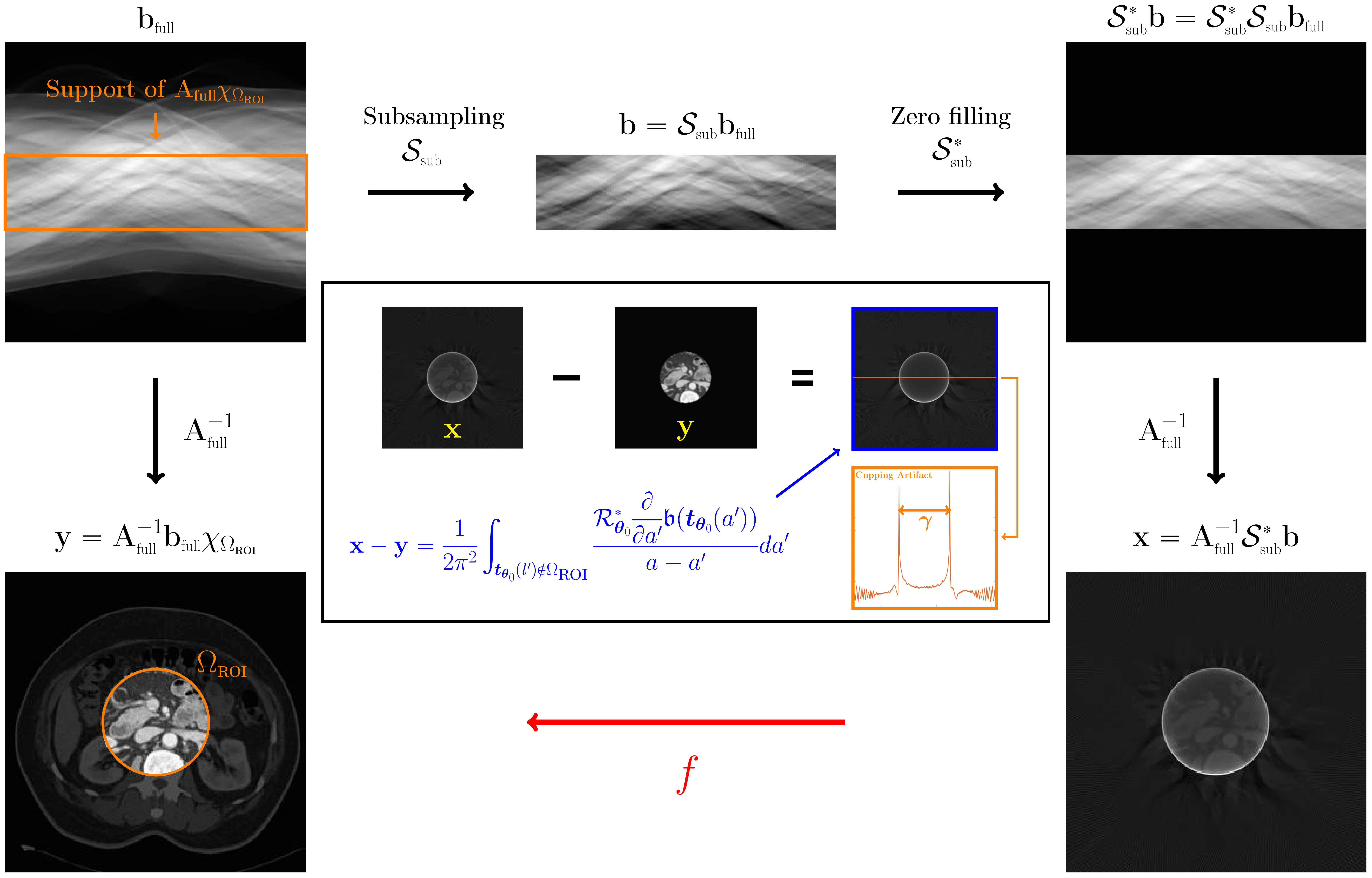}
	\caption{Interior tomography problem is to recover an image $\y = \A^{-1}_{\mbox{\tiny full}} \b_{\mbox{\tiny full}} \chi_{\Omega_{\mbox{\tiny ROI}}}$ in our region of interest (ROI) $\Omega_{\mbox{\tiny ROI}}$ by using the truncated data $\b = \mathcal{S}_{\mbox{\tiny sub}}\b_{\mbox{\tiny full}}$, where $\mathcal{S}_{\mbox{\tiny sub}}$ is a subsampling operator and $\chi$ is a characteristic function. Applying the deep learning method, the reconstruction map $f : \x \mapsto \y$ is learnable because of the analyticity of residual $\x-\y$.}
	\label{LT}
\end{center}
\end{figure*}

\subsubsection{Uniform sampling with adding one phase encoding line}
This section provides a way to improve the separability by adding only one phase encoding line to a uniform subsampling. Let $\mathcal S_{\mbox{\tiny{sub}}}$ be the uniform subsampling of factor 4 upon adding one phase encoding line. Then, $\x=\A^\sharp\b$ can be decomposed into two parts:
\begin{equation}\label{Poisson1}
\x(a,b)= \x_1(a,b) + \x_2(a,b)
\end{equation}
where $\x_1$ is the uniform sampling part given by
\begin{align}
\x_1(a,b):=\dfrac{1}{4}\underset{b'~\equiv b ~(\mbox{\scriptsize mod}\f{\sqrt{n}}{4}) }{\sum} \y(a,b')
\end{align}
and $\x_2$ is the single phase encoding part given by
\begin{align}
\x_2(a,b):=\underset{b'\in\Bbb Z_{\sqrt{n}}}{\sum}\y(a,b')e^{2\pi i (b-b') \Delta k}
\end{align}
Adding the additional low frequency line in the k-space (compared to the previous uniform sampling) provides the additional information of $\x_2$. Subsequently, the situation is dramatically changed to counter the anomaly-location uncertainty in uniform sampling. Fig. \ref{SolLU} shows why the $\x_2$ information can effectively handle the location uncertainty in Observation \ref{ob3}.

\subsection{Interior tomography}\label{sec-localCT}
This section explains the underdetermined system for the interior tomography problem. For simplicity, let us consider a 2-D parallel beam system and assume that the projection data for the entire field of view(FOV) is given by
\begin{equation} \label{LTCT}
\mathfrak{b}_{\mbox{\tiny{full}}}(\varphi,s)=\mR \mathfrak{u}(\varphi,s) := \int_{\R^2}\mathfrak{u}(\bt)\delta(\boldsymbol{\theta} \cdot\bt-s)d\bt
\end{equation}
where $\mathfrak{u}$ represents an attenuation distribution on 2D-slice, $\bt=(t_1,t_2)$, $\boldsymbol{\theta}=(\cos\varphi,\sin\varphi)$, and $\delta(\cdot)$ is the Dirac delta function. The discrete version of \eqref{LTCT} can be expressed by the following linear system
\begin{equation} \label{linearEq1}
\A_{\mbox{\tiny{full}}}\u =\b_{\mbox{\tiny{full}}}
\end{equation}
where the Nyquist criterion must be considered in terms of the expected resolution of the CT image($\u$) and sampling($\b$).
The standard reconstruction $\u=\A_{\mbox{\tiny{full}}}^{-1}\b_{\mbox{\tiny{full}}}$ is based on the FBP algorithm, which is based on the following identity:
\begin{equation} \label{FBP}
\mathfrak{u}(\bt) = \int_{0}^{\pi}\int_{\R} |\omega|\mathcal{F}\mathfrak{b}_{\mbox{\tiny{full}}}(\varphi,\omega) e^{2\pi i \omega \bt \cdot \boldsymbol{\theta} } d\omega d\varphi
\end{equation}
where $\mathcal{F}$ is 1 dimensional Fourier transform associated with the variable $\omega$ and $\mathfrak{u}$ and $\mathfrak{b}$ are the continuous forms of $\u$ and $\b_{\mbox{\tiny{full}}}$, respectively, in \eqref{linearEq1} \cite{Natterer1986}.

Now, we are ready to explain the interior tomography problem. Let $\roi\subset \R^2$ denote the local region of interests(ROI), whose size in the interior tomography is smaller than that of a patient's body to be scanned, as depicted in Fig. \ref{LT}. In the interior tomography, we attempt to reconstruct $\mathfrak{u}\chi_\roi$ by using the truncated data $\mathfrak{b}_{\mbox{\tiny{full}}}\chi_D$, where $\chi_\roi$ is the characteristic function of $\roi$ and $D$ is the support of $\mR\chi_\roi$.

The discrete form of this interior tomography can be expressed as follows:
\begin{equation}
\mbox{Reconstruct $\y = \u \chi_\roi$ satisfying $\A\u = \b$}
\end{equation}
where $\b=\mathcal{S}_{\mbox{\tiny sub}}(\b_{\mbox{\tiny full}})$, with the subsampling $\mathcal{S}_{\mbox{\tiny sub}}$ defined as $\mathcal{S}_{\mbox{\tiny sub}}(\b_{\mbox{\tiny full}}):=\b_{\mbox{\tiny full}}\chi_D$, and $\A := \mathcal{S}_{\mbox{\tiny sub}}\A_{\mbox{\tiny full}}$.
The dual operator of $\mathcal{S}_{\mbox{\tiny sub}}$, which is denoted by $\mathcal{S}_{\mbox{\tiny sub}}^*$, can be interpreted as a zero-filling process in the unmeasured parts of $\b$ in terms of $\b_{\mbox{\tiny{full}}}$, as shown in Fig \ref{LT}.
The standard FBP algorithm for the zero-filled data $\mathcal{S}_{\mbox{\tiny sub}}^*\b$ provides the image $\x=\A_{\mbox{\tiny full}}^{-1} \mathcal{S}_{\mbox{\tiny sub}}^*\textbf{b}$ with cupping artifacts \cite{Faridani1992,Wang2013}. Then, our reconstruction problem is the following:
\begin{align}\label{LCF}
& \mbox{Find a function $f:\x\mapsto \y$} \nonumber \\ & \mbox{satisfying  $\A^\sharp\A\u =\x$ and $\y = \u \chi_\roi$.}
\end{align}

To explain the learnability of $f : \x \mapsto \y$, we consider the continuous version. The Hilbert transform of $\mathfrak{u}$ with $\bth$ is defined by
\begin{equation}
\mathcal H_{\bth}\mathfrak{u}(\bt) =\f {1}{\pi} \int_{\R} \f{\mathfrak{u}(\bt_{\bth}(a))}{a-\bt\cdot\bth} da \end{equation}
where $\bt_{\bth}(a)$ is the point given by
\begin{equation}
\bt_{\bth}(a)=a\bth+(\bt\cdot\bth^{\perp})\bth^{\perp},~ (\bth^{\perp}=(-\sin\varphi,\cos\varphi)).
\end{equation}
Note that $\{\bt_{\bth}(a):a \in\R\}$ is the $\bth$ directional line passing through  $\bt$.
Let us define
\begin{equation}
\mR^*_{\bth_0}h(\bt)=\int_{\varphi_0}^{\varphi_0+\pi}h(\bt\cdot\bth,\varphi)d\varphi
\end{equation}
Most interior tomography algorithms are based on the following identity \cite{Noo2004,Wang2013} :
\begin{equation} \label{eq1}
\mathfrak{u}(\bt_{\bth_0}(a)) = \f{1}{2}\mR^*_{\bth_0}\mathcal H_{\bth_0}\dfrac{\p}{\p a} \mathfrak{b} (\bt_{\bth_0}(a))
\end{equation}
Applying the Hilbert transform to both sides of the above identity,
\begin{equation}
\mathcal H_{\bth_0}\mathfrak{u}(\bt_{\bth_0}(a)) =-\f{1}{2\pi} \mR^*_{\bth_0}\dfrac{\p}{\p a} \mathfrak{b} (\bt_{\bth_0}(a))
\end{equation}

Given $\bt\in \Omega_{\mbox{\tiny ROI}}$ and $\bth_0$, we have the following identity: For $\bt_{\bth_0}(a)\in \Omega_{\mbox{\tiny ROI}}$,
\begin{equation}
\mathfrak{u}(\bt_{\bth_0}(a))=\Psi^{\mbox{\footnotesize in}}_{\bth_0}\mathfrak{b}(\bt_{\bth_0}(a))+\Psi^{\mbox{\footnotesize out}}_{\bth_0}\mathfrak{b} (\bt_{\bth_0}(a))
\end{equation}
where
\begin{equation}
\Psi^{\mbox{\footnotesize in}}_{\bth_0}\mathfrak{b}(\bt_{\bth_0}(a))=\dfrac{1}{2\pi^2} \int_{\bt_{\bth_0}(a')\in \Omega_{\mbox{\tiny ROI}}} \dfrac{\mR^*_{\bth_0} \dfrac{\p}{\p a'}\mathfrak{b}(\bt_{\bth_0}(a'))}{a'-a}da'
\end{equation}
and \begin{equation}
\Psi^{\mbox{\footnotesize out}}_{\bth_0}\mathfrak{b}(\bt_{\bth_0}(a))=\dfrac{1}{2\pi^2} \int_{\bt_{\bth_0}(a')\notin \Omega_{\mbox{\tiny ROI}}} \dfrac{\mR^*_{\bth_0} \dfrac{\p}{\p a'}\mathfrak{b}(\bt_{\bth_0}(a'))}{a'-a}da'.
\end{equation}
The main point is that $\Psi^{\mbox{\footnotesize out}}_{\bth_0}\mathfrak{b}(\bt_{\bth_0}(a))$ is analytic in the line segment $\gamma=\{a\in\R:\bt_{\bth_0}(a)\in \Omega_{\mbox{\tiny ROI}}\}$ \cite{Wang2013}. This means that $\Psi^{\mbox{\footnotesize out}}_{\bth_0}\mathfrak{b}(\bt_{\bth_0}(\gamma))$ is completely determined by its knowledge in any open subset of $\gamma$. Consequently, $\mathfrak{u}(\bt_{\bth_0}(\gamma))$ can be recovered from $\Psi^{\mbox{\footnotesize in}}_{\bth_0}\mathfrak{b}(\bt_{\bth_0}(\gamma))$ and the information of $\Psi^{\mbox{\footnotesize out}}_{\bth_0}\mathfrak{b}(\bt_{\bth_0}(\gamma))$ in the small open subset of $\gamma$.

Now, we revisit the original discrete problem \eqref{LCF}. Finding the function $f:\x\mapsto\y$ is equivalent to finding the correction of the residual $\x-\y$. The analytic property of $\Psi^{\mbox{\footnotesize out}}_{\bth_0}\mathfrak{b}(\bt_{\bth_0}(\gamma))$ explains the structure of the residual $\x-\y$ in $\Omega_{\mbox{\tiny ROI}}$. Owing to the analytic structure of $\x-\y$ along the line segments,  $\x-\y$ in $\roi$  has very different image structure from that of $\y$ (medical image); therefore, $\x$ can be decomposed into $\y$ and $\x-\y$. Hence, the function $f:\x\mapsto\y$ is learnable.

\subsection{Sparse-view CT}\label{sec-sparseCT}
\begin{figure*}[t!]
\begin{center}
	\includegraphics[width=0.8\textwidth]{./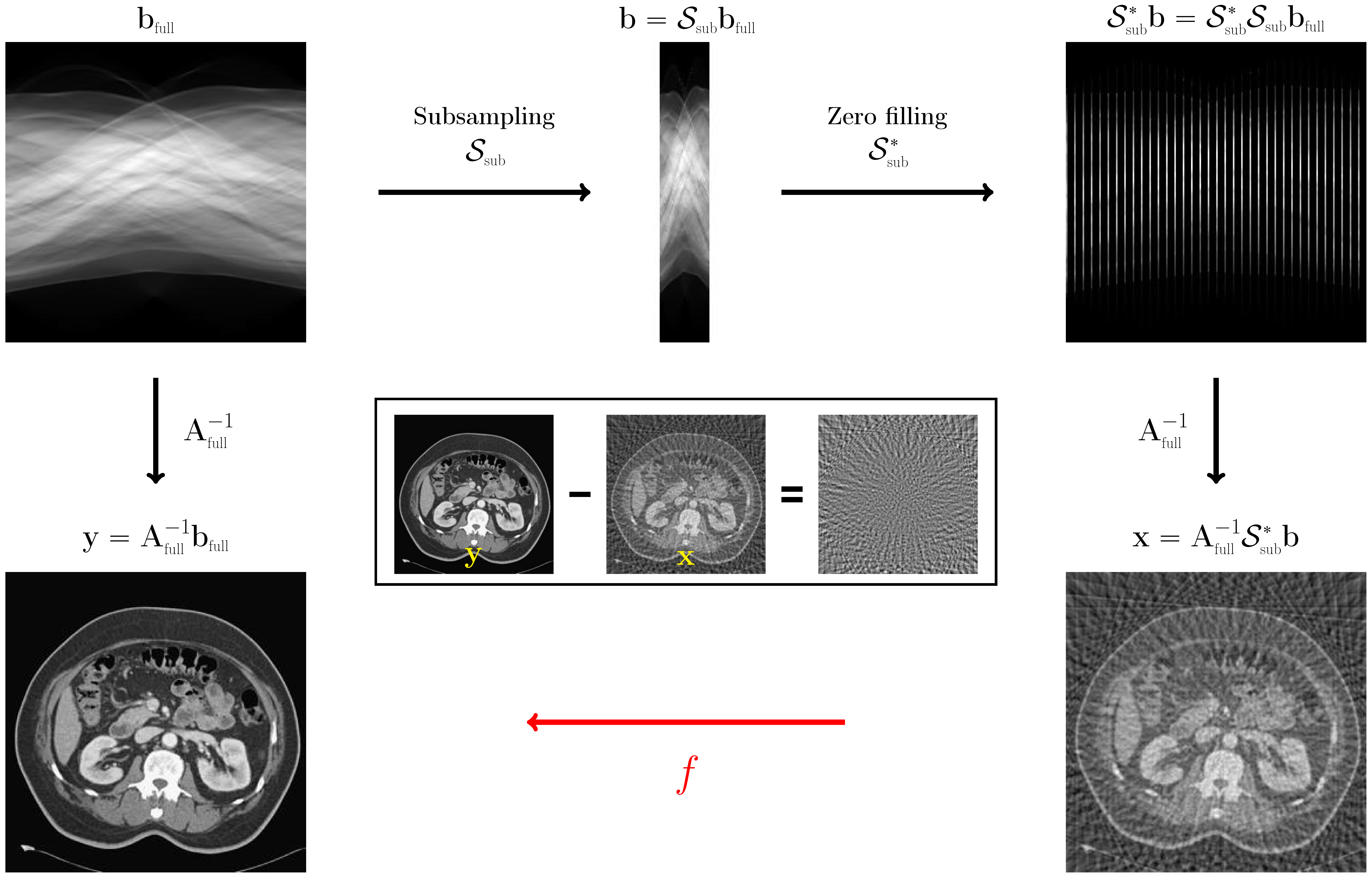}
	\caption{Sparse-view CT aims to reconstruct an image $\y=\A_{\mbox{\tiny full}}^{-1} \b_{\mbox{\tiny full}}$ from uniformly undersampled data $\b=\mathcal{S}_{\mbox{\tiny sub}}\b$, where $\mathcal{S}_{\mbox{\tiny sub}}$ is the subsampling operator. Applying the deep learning method, we attempt to learn $f$ that produces $\y$ from the input $\x=\A^{-1}_{\mbox{\tiny full}} \mathcal{S}^{*}_{\mbox{\tiny sub}} \b$. The map $f$ is learnable due to a simple structure of the residual $\x - \y$.}
	\label{SVCT}
\end{center}
\end{figure*}
The sparse-view CT problem aims to find a reconstruction function that maps from a sparse-view sinogram $\b$ to an image whose quality is as high as that of a regular CT image reconstructed by full-view sinogram $\b_{\mbox{\tiny{full}}}$. Throughout this section, we will denote the sub-sampling operator by $\mathcal S_{\mbox{\tiny sub}}$, so $\b=\mathcal \mathcal S_{\mbox{\tiny sub}}\b_{\mbox{\tiny{full}}}$.

Assuming that the subsampled data $\b=\mathcal \mathcal S_{\mbox{\tiny sub}} (\b_{\mbox{\tiny{full}}})$ violates the Nyquist's rule, the standard FBP algorithm using $\b$ produces a streaking artifacted CT image, which can be expressed as
\begin{equation}
\x=\A_{\mbox{\tiny full}}^{-1} \mathcal{S}_{\mbox{\tiny sub}}^*\b
\end{equation}
where $\mathcal{S}_{\mbox{\tiny sub}}^*\b$ is a zero-filled data of $\b$ and $\mathcal{S}_{\mbox{\tiny sub}}^*$ is the dual operator of $\mathcal S_{\mbox{\tiny sub}}$.

The corresponding high quality image reconstructed from $\b_{\mbox{\tiny{full}}}$ (satisfying the Nyquist's rule) is given by
\begin{equation}
\y=\A_{\mbox{\tiny full}}^{-1} \b_{\mbox{\tiny{full}}}
\end{equation}
The goal is to learn the function $f : \x \mapsto \y$ using $\{(\x^{(k)},\y^{(k)})\}_{k=1}^{\mathfrak{n}_{\mbox{\tiny data}}}$.

The image structures of $\x-\y$ and $\y$ are very different from each other, as shown in Figure \ref{SVCT}. Numerous researches have been conducted on image enhancement methods by suppressing noise $\x-\y$. The following CS technique is widely used to alleviate the noise $\x-\y$  :
\begin{equation}\label{TV-min2}
\y=\underset{\textbf{y}}{\mbox{argmin}} ~ \|\A^\sharp\A \y - \x \|_{\ell^2}^2 + \lambda \| \nabla \y \|_{\ell^1}
\end{equation}
where $\| \nabla \y \|_{\ell^1}$ is used to penalize the undesired feature $\x-\y$. According to \cite{Candes2005,Candes2006b}, the convex minimization problem \eqref{TV-min2} is somehow close to the following problem:
\begin{equation}\label{L0-min}
\y=\underset{\textbf{y}\in W_\alpha }{\mbox{argmin}} ~ \|\A^\sharp \A \y - \x \|_{\ell^2}^2
\end{equation}
where $W_\alpha:=\{\y: \| \nabla \y\|_0 \le \alpha\}$, $\alpha$ is a positive integer, and $\|\nabla\y\|_0$ indicates the number of non-zero entries of $\nabla\y$. Since $W_\alpha$ is a finite union of the $\alpha$-dimensional space, the constraint $\textbf{y}\in W_\alpha$ shrinks the domain of solutions by enforcing sparsity. If $\alpha$ is sufficiently smaller than $m$ (the number of equations), then $\A$ satisfies $\alpha$-RIP condition \cite{Candes2006b} within the sparse set $W_\alpha$. Namely, $\|\textbf{A}\textbf{y}-\A\textbf{y}'\|\neq 0$ for any different images $\y \neq \y'$ in $W_\alpha$, so that the uniqueness of the problem \eqref{L0-min} can be guaranteed within the sparse set $W_\alpha$.

Assuming that $\A$ satisfies the $\alpha$-RIP condition and that $\mathcal{M}_{\mbox{\tiny image}}$ is given by $W_\alpha$,  $\A$ satisfies the $\mathcal{M}$-RIP condition in \eqref{MRIP}, and, thus, the reconstruction $f$ is learnable from Observation \ref{obsMRIP}. In the sparse-view CT problem, as several CS methods exhibit fairly successful reconstruction results \cite{Zhu2013,Kudo2013}, the forward operator $\A$ seems to possess some property somewhat closely related with RIP condition on the sparse set $W_\alpha$. However, the handmade set $W_\alpha$ as prior knowledge can be viewed as a very rough approximation to the manifold $\mathcal{M}_{\mbox{\tiny image}}$, and hence it is limited in its ability to preserve small details with important medical information. Owing to the highly curved structure of $\mathcal{M}_{\mbox{\tiny image}}$ as observed in Section \ref{M-example}, deep learning approaches \cite{Jin2017,Han2018,Xie2018} have been proposed to facilitate the machine-learned intrinsic regularizer using training data.

\begin{remark}
Let us give a brief comment on $\A_{\mbox{\tiny full}}^{-1}$. In general, the solution of $\A_{\mbox{\tiny full}}\y =\b$ can be expressed as $\y = (\A_{\mbox{\tiny full}}^* \A_{\mbox{\tiny full}})^{-1} \A_{\mbox{\tiny full}}\b$ and therefore $\A_{\mbox{\tiny full}}^{-1}$ should be understood as $(\A_{\mbox{\tiny full}}^* \A_{\mbox{\tiny full}})^{-1} \A_{\mbox{\tiny full}}$. In CT, $\A_{\mbox{\tiny full}}^{-1}$ represents the operator  corresponding to the FBP algorithm as a practically feasible reconstruction.
\end{remark}

\begin{figure*}[t!]
\begin{center}
	\includegraphics[width=1\textwidth]{./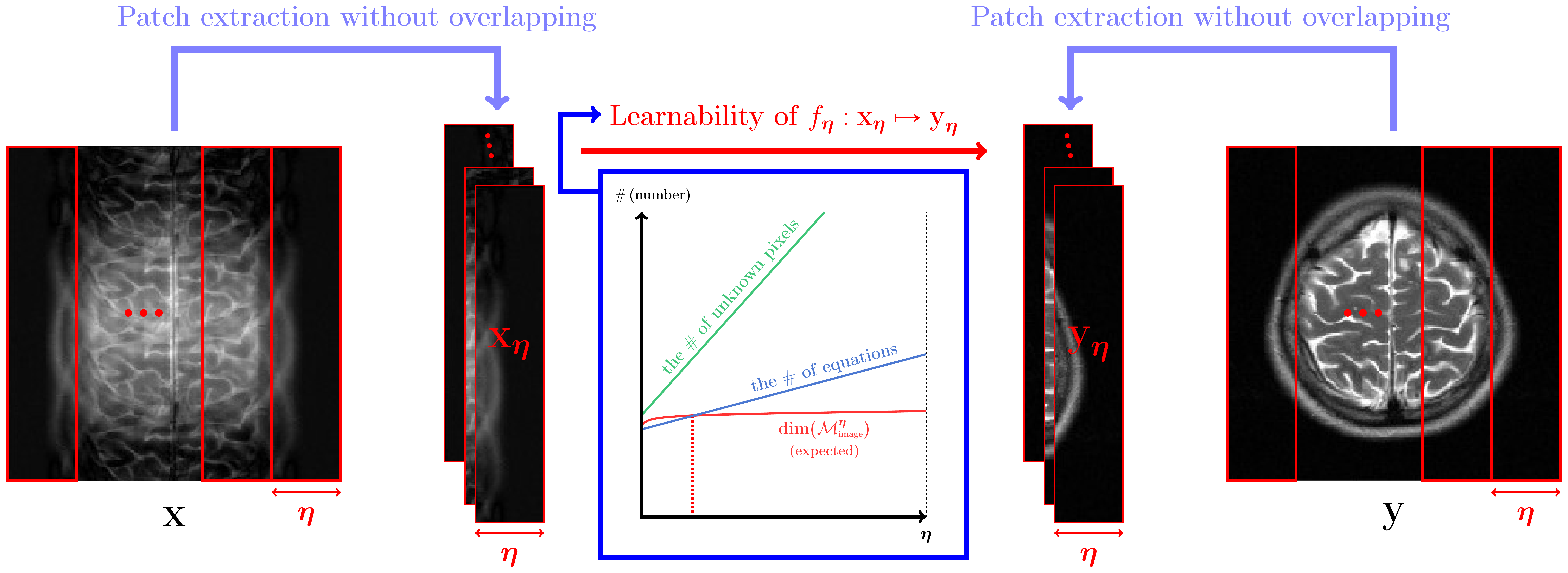}
	\caption{Learnability of $f_{\eta} : \x_{\eta} \mapsto \y_{\eta}$.  Let $\y_{\eta}$ denote an image patch of size $256 \times \eta$ extracted from a $256 \times 256$ MR image $\y$ and $\x_{\eta}$ be an aliased image obtained by $\x_{\eta} = \A^\sharp \A \y_{\eta}$. The learnability of reconstruction map $f_{\eta}$ seems to be related with the dimension of solution set $\mathcal{M}_{\mbox{\tiny image}}^{\eta}$. Here, $\mathcal{M}_{\mbox{\tiny image}}^{\eta}$ is the set of all extracted patches from $256 \times 256$ MR images.}
	\label{addscheme}
\end{center}
\end{figure*}
\section{Discussion}\label{sec-discussion}
In this section, we discuss several interesting issues related to deep learning-based solvability for underdetermined problems in medical imaging. An important question is ``what is the minimum ratio of undersampling to provide guarantee of accurate reconstruction?". It is closely related to the dimension of the manifold.

\subsection{Solvability Issue}
This subsection discuss an interesting characteristic of the learning problem in the underdetermined MRI described in Section \ref{SolveUMRI}, where the uniform subsampling of factor 4 with additional phase encoding lines is used as the subsampling strategy. For the ease of explanation, we assume $n = 256 \times 256$ (i.e. $\y$ represents a $256 \times 256$ MR image) and uniform subsampling of factor 4 adding 12 supplementary phase encoding lines.

\begin{figure*}[t!]
\begin{center}
	\includegraphics[width=0.9\textwidth]{./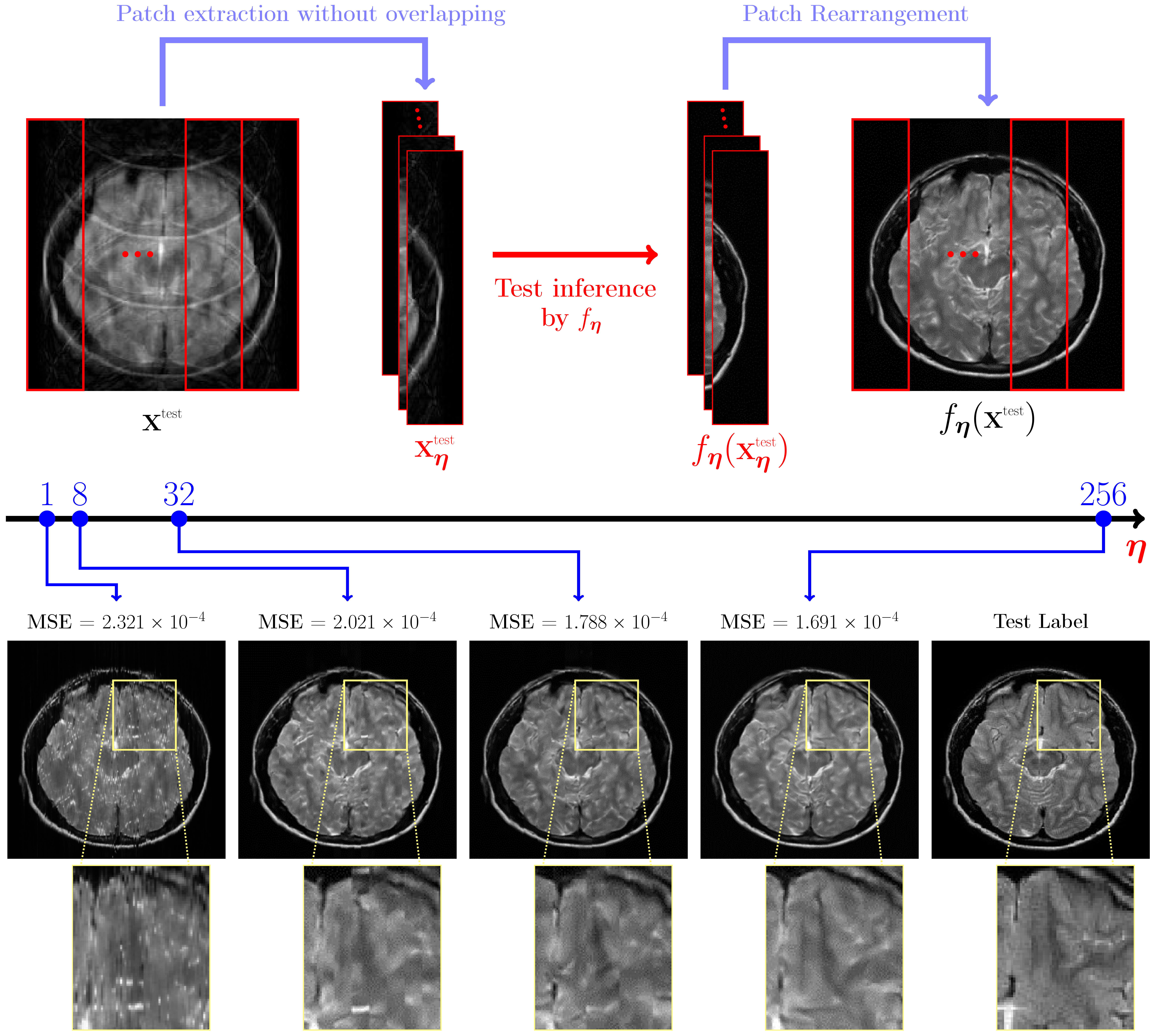}
	\caption{Test performance evaluation of $f_{\eta}$ with various $\eta = 1, 8, 32,$ and $256$. For each $\eta$, the function $f_{\eta}$ is obtained by training the  U-net with the corresponding image patches extracted from 1500 MR images $\{ \y^{(k)} \}_{k=1}^{1500}$. To compare quantitative performances with one another, a given test image $\x^{\mbox{\tiny test}} \notin \{ \x^{(k)} \}_{k=1}^{1500}$ is divided into image patches, where each patch is reconstructed through the trained U-net $f_{\eta}$, and the reconstruction outputs are rearranged into one image $f_{\eta}(\x^{\mbox{\tiny test}})$. We also qualitatively evaluate the test result qualitatively by computing the mean squared error(MSE) between the inference output $f_{\eta}(\x^{\mbox{\tiny test}})$ and label $\y^{\mbox{\tiny test}}$.}
	\label{addevaluationqa}
\end{center}
\end{figure*}

For a given integer $\eta \geq 1$, let $\{ \textbf{y}_{\eta}^{(j)} \in \mathbb{R}^{256 \times \eta} \}_{j=1}^{\mathfrak{n}_{\mbox{\tiny patch}}}$ be a set of the image patches extracted from an image $\y$ and let $\{ \textbf{x}_{\eta}^{(j)} \}_{j=1}^{\mathfrak{n}_{\mbox{\tiny patch}}}$ be the corresponding set of the aliased images, given by $\x_{\eta}^{(j)} = \A^\sharp \A \y_{\eta}^{(j)}$. We assume that there is no overlap between all the patches. (See Fig. \ref{addscheme}.)

This section aims to investigate whether the factor of $\eta$ is important in learning $f_{\eta} : \x_{\eta} \mapsto \y_{\eta}$.
To observe the effects of $\eta$, we train the U-net by varying $\eta$, using the following training dataset:
\begin{equation}
\{ (\textbf{x}_{\eta}^{(j,k)}, \textbf{y}_{\eta}^{(j,k)}) ~ | ~ j = 1,\cdots, \mathfrak{n}_{\mbox{\tiny patch}} \mbox{ and } k = 1, \cdots, \mathfrak{n}_{\mbox{\tiny data}}\}
\end{equation}
where $\textbf{y}_{\eta}^{(j,k)}$ represents $j$-th image patch extracted from the $n$-th label MR image $\textbf{y}^{(k)}$.

The reconstruction map $f_{\eta}$ aims to solve the linear system that has $76 \times \eta$ number of equations with $256 \times \eta$ number of unknowns. As $\eta$ increases, the number of unknowns increases more rapidly than the number of equations. See the middle box in Fig. \ref{addscheme}. However, our experimental results, described in Fig. \ref{addevaluationqa}, demonstrate that the learning ability is gradually improved as $\eta$ increases.

The experimental results in Fig. \ref{addevaluationqa} can be explained by means of the dimensionality of the manifold given by
\begin{equation}
\mathcal{M}_{\mbox{\tiny image}}^{\eta} := \{ \textbf{y}_{\eta} ~ | ~ \textbf{y}_{\eta} \mbox{ is a }  256 \times \eta \mbox{ patch extracted from } \textbf{y} \}
\end{equation}
The dimension of $\mathcal{M}_{\mbox{\tiny image}}^{\eta}$, denoted by $g_{\mathcal M}(\eta)$, can be viewed as a function of $\eta$ variable. As shown in the middle box in Fig. \ref{addscheme}, $g_{\mathcal M}(\eta)$ seems to grow very slowly; therefore, $g_{\mathcal M}(\eta)$ might intersect with the linear function $g_{\mbox{\tiny $\#$equations}}(\eta)=76\eta$ (i.e. the number of equations). Assuming $\eta^*$ is the intersection point (i.e. $g_{\mathcal M}(\eta^*) = g_{\mbox{\tiny $\#$equations}}(\eta^*)$), $f_{\eta}$ can be regarded as learnable, provided $\eta \geq \eta^*$. This interpretation can be supported by the error estimations in Fig. \ref{addevaluationqa}.

\begin{figure*}[t!]
\centering
\includegraphics[width=0.9\textwidth]{./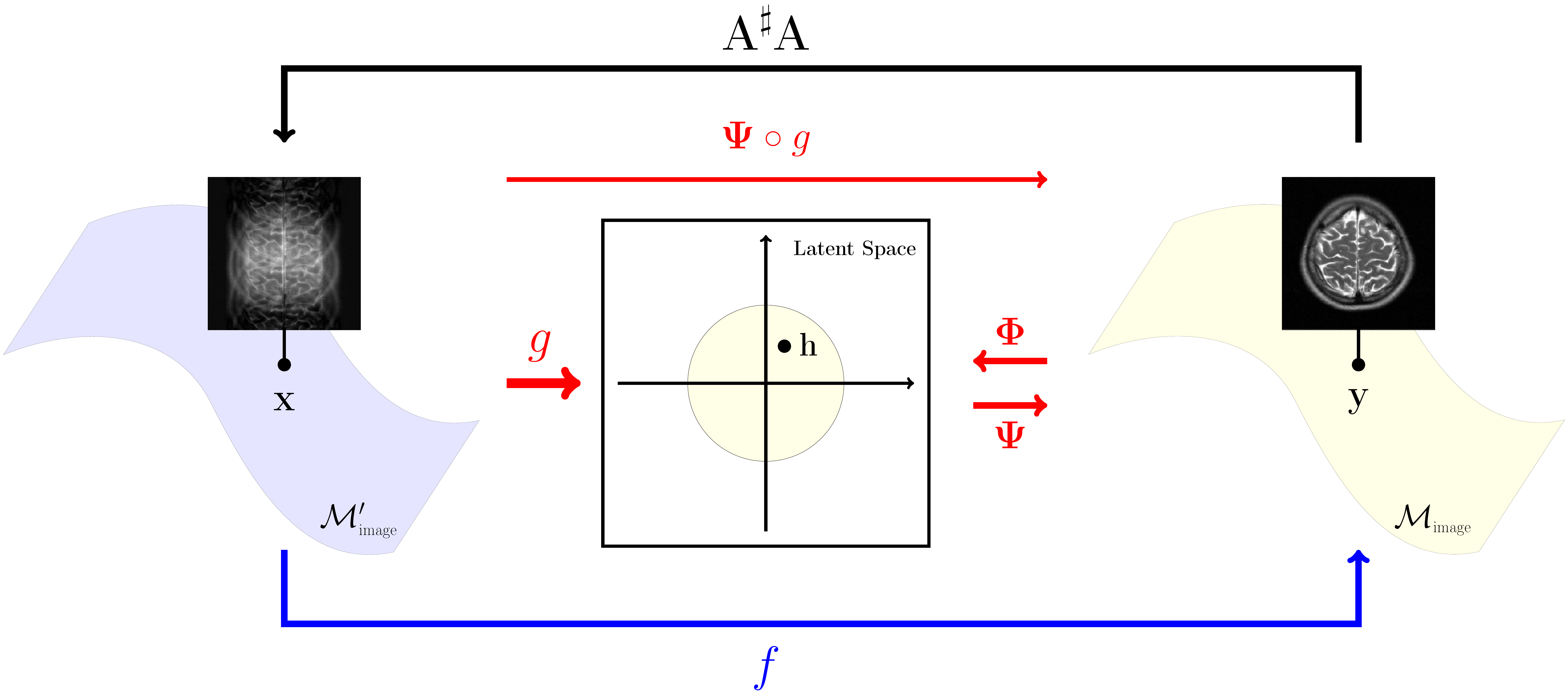}
\caption{What happens if a low dimensional latent representation is possible? Imagine that we have a low dimensional latent generator $\Psi : \h \mapsto \y$ and an encoder $\Psi : \y \mapsto \h$ such that $\Psi \circ \Phi(\y) \approx \y$ for all $\y \in \mathcal{M}_{\mbox{\tiny image}}$. If we have a map $g : \x \mapsto \h = \Psi(\y)$, then the reconstruction map $f : \x \mapsto \y$ is given by $f = \Psi \circ g$.}
\label{ManiLearn1}
\end{figure*}

Let us explain the reasons for expecting $g_{\mathcal M}(\eta)$ to grow significantly slowly as  $\eta$ increases. Assume that $\mathcal{M}_{\mbox{\tiny image}}$ is the set of all the human head MR images. Then, all the images in $\mathcal{M}_{\mbox{\tiny image}}$ possess a similar anatomical structure that consists of skull, gray matter, white matter, cerebellum, among others. In addition, every skull and tissue in the image have distinct features that can be represented nonlinearly by a relatively small number of latent variables, and so does for the entire image. Notably, the skull and tissues of the image are spatially interconnected, and even if a part of the image is missing, the missing part can be recovered with the help of the surrounding image information. This is the reason that image inpainting techniques \cite{Bertalmio2000} have been successful in image restoration for filling-in the missing areas in images. These observations seem to indicate that $g_{\mathcal M}(\eta)$ does not change much with $\eta$ near $\eta=256$, where $g_{\mathcal M}(256)$ corresponds to the dimension of the entire image. Therefore, we expect that $g_{\mathcal M}(256) \ll g_{\mbox{\tiny $\#$equations}}(256)$, so that there exists $\eta^*$ (the turning point for learnability) such that $g_{\mathcal M}(\eta^*) = g_{\mbox{\tiny $\#$equations}}(\eta^*)$. If the curse-of-dimensionality does not matter, it may be better to learn 3D images in total rather than dividing 3D images into multiple pieces. A rigorous mathematical analysis of this issue  is the subject of our future research topic.

\subsection{Some issues on learning low dimensional representation} \label{LDR}
In our undersampled problems, the dimension of $\y$ (i.e., the total number of pixels in the image) is considerably bigger than the dimension of $\b$ (i.e., the number of independent components in the measurement data). Since there exist infinitely many images that solve the mathematical model $\A\y=\b$, we need to reflect prior information about the unknown solution manifold, either implicitly or explicitly, in the image restoration process.

Over several decades, various regularization approaches have been used with predefined convex regularization functionals in order to incorporate a-priori information on $\y$. CS methods involving $\ell^1$-norm regularization minimization have been powerful for noise removal, whereas they suffer from limitations in preserving small features. In medical imaging, there are a variety of small features, such that the difference in the data fidelity is very small as compared with that in normalization, whether or not those small features are present. Hence, finding a more sophisticated normalization to keep small features remains a challenging problem.

Unlike in CS (predetermined convex-norm-based approach), deep learning can be viewed as a black box model approach where training data is used for probing the solution manifold. To ensure the possibility of solving undersampled problems through deep learning, it would be desirable to investigate the performance of low dimensional representation learning for the unknown manifold from training data.

Autoencoder(AE) techniques (as the natural evolution of PCA) are widely used to find a low dimensional representation for the unknown $\mathcal M_{\mbox{\tiny image}}$ from the known training data $\{\y^{(k)}\}_{k=1}^{\mathfrak{n}_{\mbox{\tiny data}}}$ \cite{Hinton2006,Kingma2013}. The AE consists of a encoder $\Phi:\y\to \h$ for a compressed latent representation
and a decoder $\Psi:\h\to \y$ for providing $\Psi\circ \Phi(\y)\approx \y$ (i.e., an output image is similar to the original input image). Assuming that $\Phi$ provides a satisfactory approximation of the solution manifold, the underdetermined problem \eqref{undersampled} can be solved as follows:
\begin{equation}\label{min-Psi}
f(\x)=\Psi(\h),~~ \h=\underset{\h}{\mbox{argmin}} ~ \| \A\Psi(\h) -\A\x\|
\end{equation}
A deep learning technique can be used to solve the problem \eqref{min-Psi}, as in the minimization problem \eqref{min-Psi} it might be difficult to use the standard gradient descent method because of the complex deep learning structure of the decoder $\Psi$. If the dimension of the latent space is reasonably small, the reconstruction map is achieved by
\begin{equation}\label{min-Psi2}
f(\x)= \Psi\circ g (\x),~g:=\underset{g\in\Bbb{NN}}{\mbox{argmin}} \sum_{k=1}^{\mathfrak{n}_{\mbox{\tiny data}}} \| g({\x}^{(k)}) -\h^{(k)}\|
\end{equation}
where $\h^{(k)}= \Phi(\y^{(k)})$. One can refer to Fig. \ref{ManiLearn1} for the schematic understanding of the method. Recent papers have reported that the AE-based approaches show remarkable performances in several applications \cite{Chang2017,Seo2019,Jalali2019,Tezcan2018}. However, for high dimensional data, AEs seem to suffer from the blurring and loss of small details, as depicted in Fig. \ref{ManiLearnResult1}. Improving performance of AEs in high dimensional medical image applications is still a challenging issue.

\begin{figure*}[t!]
	\centering
	\includegraphics[width=1\textwidth]{./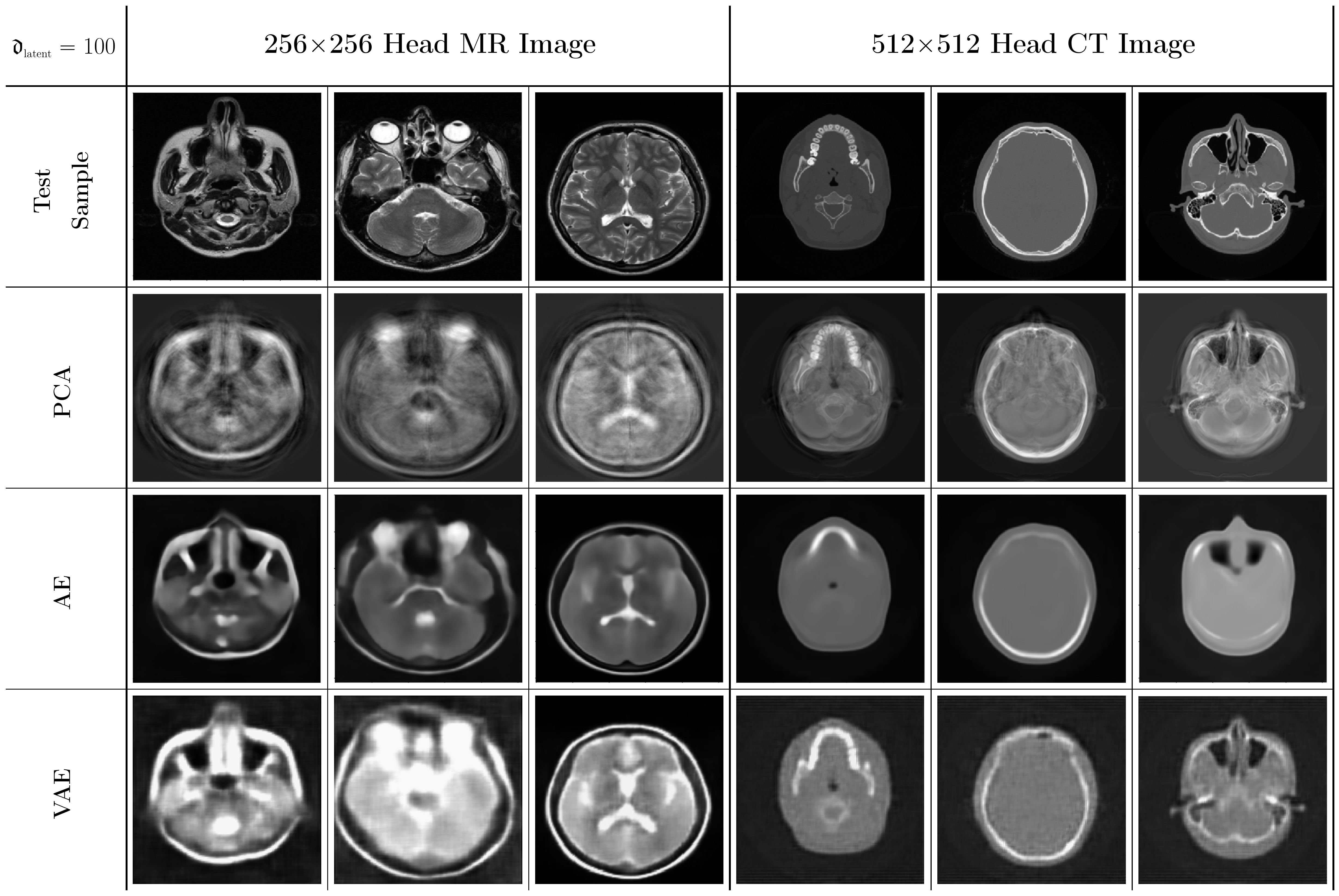}
	\caption{Learning a low dimensional representation for MR and CT images with three different dimension reduction techniques (principal component analysis(PCA), auto-encoder(AE), and variational AE(VAE)).
		By using each technique, an encoder $\Phi$ and decoder $\Psi$ are trained so that MR or CT images are projected into a 100 dimensional space (i.e. $\mathfrak{d}_{\mbox{\tiny latent}}=100$) by using 1800 MR images of pixel dimension $256 \times 256$ or using 3400 CT images of pixel dimension $512 \times 512$.
		PCA was performed by a built-in function \texttt{pca} in MATLAB. For AE and VAE, we used convolutional AE and VAE structures, respectively, consisting of modified residual blocks from ResNET \cite{He2016a,He2016b}.
		The networks were implemented in the Tensorflow environment. After the training, we tested three different samples for each MR and CT case, displayed in the first row, and the corresponding test results are displayed from the second to last row.}
	\label{ManiLearnResult1}
\end{figure*}
sGenerative adversarial networks (GANs) have been utilized to generate realistic images via interactions between learning and synthesis \cite{Goodfellow2014a,Radford2016,Yi2019}. Typically, the architecture of GANs comprises two main parts; generator $\Psi$ and discriminator $\Gamma$. The GAN network aims to find a $\Psi$ that maps from a random noise vector $\h$ in the latent space to an image in a real data distribution associated with the training data $\{\y^{(k)}\}_{k=1}^{\mathfrak{n}_{\mbox{\tiny data}}}$. The generator $\Psi$ is trained with the assistance of the discriminator $\Gamma$ in such a way that $\Gamma$ misclassifies $\Psi(\h)$ as a real image. The training procedure of GAN can be viewed as a performance competition between the generator and the discriminator.
\begin{figure*}[t!]
\centering
\includegraphics[width=1\textwidth]{./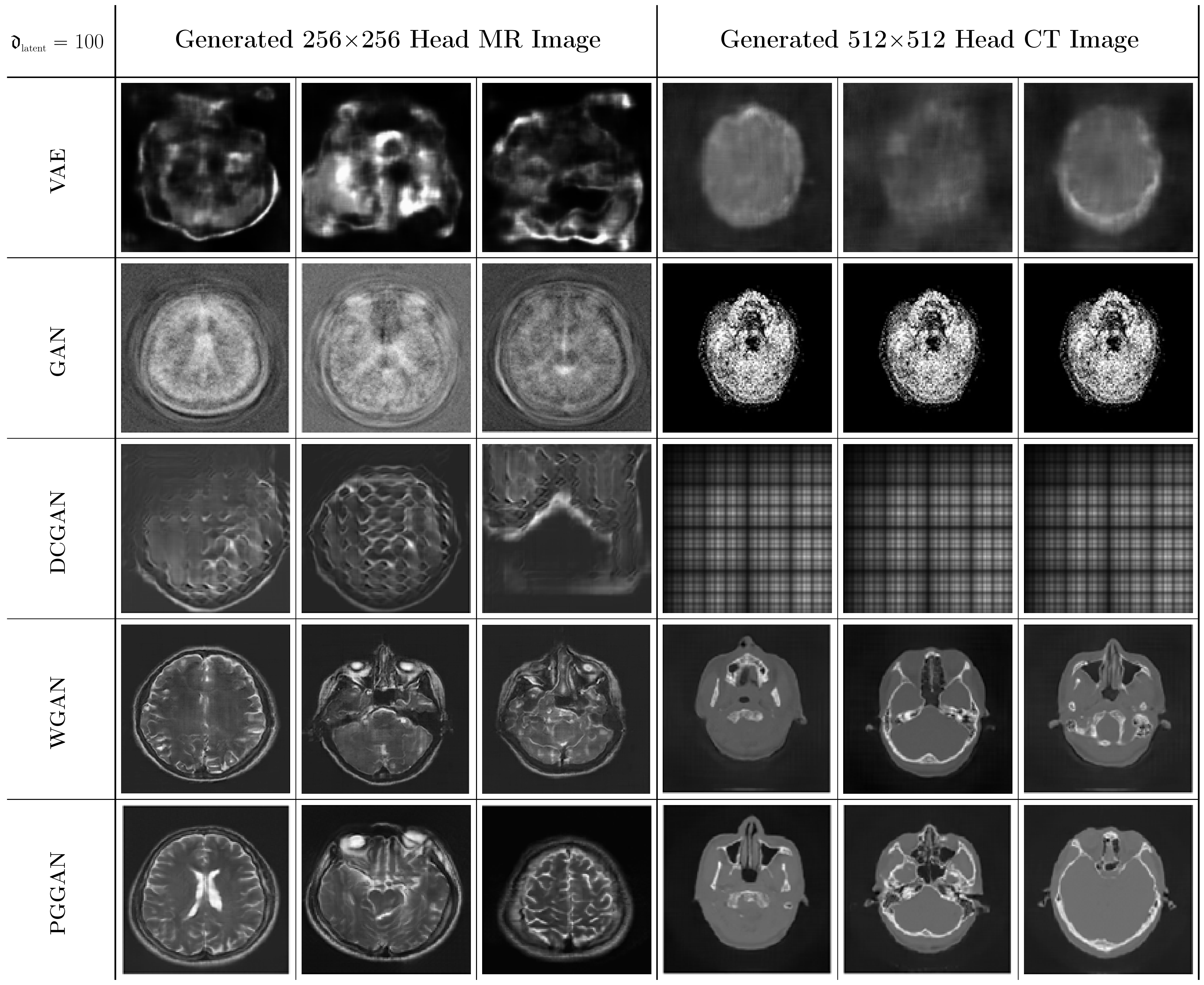}
\caption{Learning low dimensional representation for MR and CT images using five different generative models, variational auto-encoder(VAE), generative adversarial network(GAN), deep convolutional GAN(DCGAN), Wasserstein GAN(WGAN), and progressive generative GAN(PGGAN). Using these generative models, we learn a generator $\Psi$ that synthesizes MR or CT images from points sampled in 100 dimensional Gaussian distribution (i.e. $\mathfrak{d}_{\mbox{\tiny latent}}=100$). In our experiments, we used a typical network structure for each generative model, as described in \cite{Kingma2013,Goodfellow2014a,Radford2016,Gulrajani2017,Karras2018}. We replaced the cross-entropy of GAN into Hinge loss, and the Wasserstein loss of PGGAN into Hinge loss with spectral normalization in order to obtain the stability on learning \cite{Zhang2018}. All the networks were implemented in the Tensorflow environment with 1800 MR images and 3400 CT images. In each row, three different synthesized images from each trained generative models are displayed for each CT and MR case.}
\label{ManiLearnResult2}	
\end{figure*}
Although GANs have achieved remarkable success in generating various realistic images, there exist some limitations in synthesizing high resolution medical data. The GAN's approach makes it difficult to deal with high-dimensional data because the generated image can be easily distinguished from the training data, which can lead to collapse or instability during the training process \cite{Odena2017}. Several variations of GAN, such as Wasserstein GAN(WGAN) \cite{Arjovsky2017,Gulrajani2017} and progressive growing GAN(PGGAN) \cite{Karras2018}, have been developed to deal with the training instability. WGAN uses Wasserstein distance, which may improve the loss sensitivity with respect to change of parameters, compared to the Jensen-Shannon distance used in the original GAN. Fig. \ref{ManiLearnResult2} shows that in our WGAN experiment, nearly plausible synthesis results are generated for high dimensional medical image, whereas the synthesized images still suffer from somewhat lack of reality. PGGAN facilitates synthesis of high dimensional data via hierarchical multi-scale learning fashion from low resolution to the desired high resolution; therefore, the network can focus on the overall structures at the beginning of the process, before shifting attention gradually to finer scale details via later connections as the training advances. Fig. \ref{ManiLearnResult2} depicts our PGGAN experiment with the spectral weight normalization \cite{Miyato2018}, that provides high quality synthesis.

Unfortunately, unlike bidirectional AE, which provides an explicit prior, GANs learn only the unidirectional mapping $\Psi : \h \mapsto \y$ and, therefore, it is difficult to achieve \eqref{min-Psi2}. Recently, new strategies have been developed to use the implicit prior obtained using GANs as a solution prior for solving ill-posed inverse problems such as the undersampled MRI \cite{Narnhofer2019}, image denoising \cite{Tripathi2018}, and inpainting \cite{Yeh2017}. Also, a scattering generator \cite{Angles2018} with the advantages of both GAN and AE has been proposed. However, applications in high dimensional medical imaging problems are still far from satisfactory.

\section{Conclusion}
This work concerns with the solvability of undersampling problems with the use of training data. The undersampled MRI, sparse view  CT, and interior tomography are typical examples of the underdetermined problem, which involve much fewer equations (measured data) than unknowns (pixels of the image). To compensate for the uncertainty of the huge number of free parameters (difference between the number of equations and the number of unknowns), we need to limit the solution manifold using prior information of the expected images. Regularization techniques have been widely used to impose very specific prior distributions on the expected images, such as penalizing a special norm of images (or promoting sparsity in expressions). However, norm-based regularization might not actually be able to provide a clinically useful image properly in advance. Well-known CS methods, which employ random sampling and are based on regularization methods, are effective in alleviating highly oscillatory noise while maintaining the overall structure; however, sparse sensing techniques tend to eliminate small anomalies, as shown in Section \ref{CSsection}.

DL techniques appear to deal with various underdetermined inverse problems by effectively probing the unknown nonlinear manifold $\mathcal{M}_{\mbox{\tiny image}}$ through training data. It seems to handle the uncertainty of solutions to the highly ill-posed problems.

According to Hadamard \cite{Hadamard1902}, the linear problem $\A\y =\b$ is well-posed if the following two conditions hold (while ignoring the existence issue): first, for each $\b$, it has a unique solution, and second, the solution is stable under the perturbation of $\b$.  However, we note that whether the problem is well-posed depends on the choice of the solution space. Many ill-posed problems can be well-posed within the constrained solution spaces (e.g. sparse solution spaces). For a simple example, we consider  the Poisson's equation $ \nabla \cdot \nabla u = \b $ in the 2-D domain $\Omega = \{ (r \cos \theta,r  \sin \theta ) ~ | ~  0 < r < 1, ~ 0 < \theta < \frac{3\pi}{2}\}$ with the homogeneous Dirichlet boundary condition $u|_{\p\Om}=0$. If $u_*$ is a solution, so are $u=u_*+ (r^{\frac{2}{3}n}-r^{\frac{-2}{3}n})\sin(\frac{2}{3}\theta)$
for $n=0,1,2,\cdots.$. Hence, the problem is ill-posed without the constraint of the Sobolev sapce $H^1(\Omega) = \{ u ~ | ~ \int_{\Omega} |u|^2 + |\nabla u|^2 < \infty \}$.  Similarly, the underdetermined problem $\A\y=\b$ can be well-posed under a suitable solution manifold  $\mathcal{M}_{\mbox{\tiny image}}$. DL methods seem to possess ambiguous capability of learning data representation.

Recently, several experiments regarding adversarial classifications \cite{Finlayson2019,Ching2018} (e.g., false positive output of cancer) have shown that deep neural networks obtained via gradient descent-based error minimization procedure are vulnerable to various noisy-like perturbations, resulting in incorrect output (that can be critical in medical environments). These experiments show that a well-trained function $f$ in \eqref{MDL} works only in the immediate vicinity of a manifold, whereas producing incorrect results if the input deviates even slightly from the training data manifold. In practice, the measured data is exposed to various noise sources such as machine dependent noise; therefore, the developed algorithm must be stable against the perturbations due to noise sources. Hence, normalization of the input data is essential for improving robustness and generalizability of the deep learning network against adversarial attacks \cite{Goodfellow2015,Yuan2019,Finlayson2019}.

For input data normalization, we attempt to project the input $\x$ to its normalized form $\mathfrak N(\x)$ in the way that two images $\x$ and $\mathfrak N(\x)$ are almost the same from the viewpoint of radiologists.
It is quite complicated to define the distance $\mbox{dist}_{\mbox{\tiny radiologist}}(\x, \x')$ in terms of radiologist's view. If we have a good generator $\mathfrak{G}$, it can be defined as
\begin{equation}\label{dist-radio}
\mbox{dist}_{\mbox{\tiny radiologist}}(\x, \x')=\| \h - \h^\prime \|
\end{equation}
where $\mathfrak G (\h) = \mathfrak N(\x)$ and $\mathfrak G(\h^\prime) = \mathfrak N (\x^\prime)$. The Euclidian distance $\| \h - \h^\prime \|$  can be somewhat equivalent to the geodesic distance between $\mathfrak N(\x)$ and $\mathfrak N(\x')$ on the manifold. For example, given a noisy input $\x$, its normalization $\mathfrak N(\x)$  can be a denoised image while preserving the salient features of $\x$. The issues of finding the normalization $\mathfrak N$ and the generator $\mathfrak G$ would be very challenging tasks. Although there are still several challenging issues in deep learning associated with solving ill-posed inverse problems in medical imaging area, recent remarkable developments indicate that it has enormous potential to provide a useful means of overcoming the limitations of the traditional methods.

\vspace{0.25cm}
\appendix

\section*{Supplementary Material}
This appendix provides deep learning results for undersampled MRI, interior tomography, and sparse-view CT problem. To learn a reconstruction map $f : \x \mapsto \y$, we adopt the modified U-net architecture, described in Fig. \ref{FigMPDL}, and set feature depths of the network to be multiples of 64.

To train U-net, we generate a training dataset $\{\x^{(k)},\y^{(k)}\}_{k=1}^{\mathfrak{n}_{\mbox{\tiny data}}}$ in the following sense: Let $\{ \y^{(k)}\}_{k=1}^{\mathfrak{n}_{\mbox{\tiny data}}}$ be a given set of medical images. Here, $\y^{(k)}$ represents $256 \times 256$ head MR image in undersampled MRI problem and $256 \times 256$ abdominal CT image in interior tomography and sparse-view CT problem. The corresponding input data $\x^{(k)}$ is generated by computing $\x^{(k)} = \A_{\mbox{\tiny full}}^{-1} \mathcal S^*_{\mbox{\tiny sub}} \A \y^{(k)}$. In our real implementation, 1500 MR images \cite{Loizou2010} are used for undersampled MRI problem and 1800 CT images are used for interior tomography and sparse-view CT problem.

With the generated training dataset, U-net is trained by minimizing $\ell^2$ loss, as mentioned in \eqref{DLEQ}. the minimization process was performed using Adam optimizer with learning rate 0.001, mini-batch size 16, and 3000 epochs. In addition, Batch normalization \cite{Kingma2014} is used for mitigating the overfitting issue.
\begin{figure}[h]
	\centering
	\includegraphics[width=1\textwidth]{./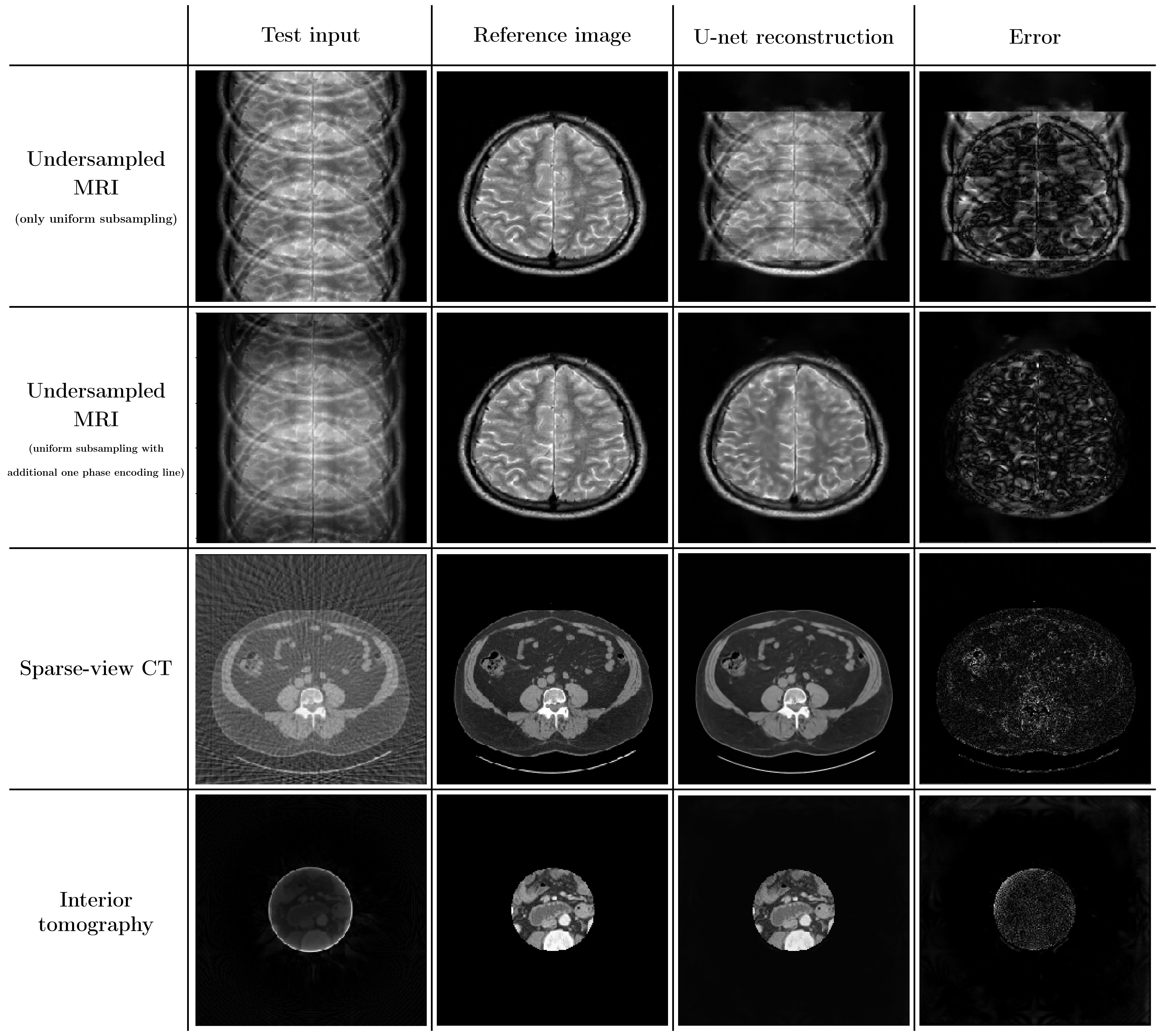}
	\caption{Deep learning results using U-net for undersampled MRI, sparse-view CT, and interior tomography problem. The images in first, second, third, and fourth column represent test input images, reference images, U-net reconstruction results, and errors, respectively.}
	\label{Appendix_Fig2}	
\end{figure}
\end{document}